\Crefname{remark}{Remark}{Remarks}
\Crefname{rmk}{Remark}{Remarks}
\Crefname{dfn}{Definition}{Definitions}
\Crefname{thm}{Theorem}{Theorems}
\Crefname{cor}{Corollary}{Corollaries}
\Crefname{lem}{Lemma}{Lemmas}
\Crefname{examplex}{Example}{Examples}
\Crefname{prop}{Proposition}{Propositions}
\newcommand{\kibitz}[2]{\ifnum\Comments=1{\color{#1}{#2}}\fi}
\newtheorem{thm}{Theorem}
\newtheorem{dfn}{Definition}
\newtheorem{lem}{Lemma}
\newtheorem{ex}{Example}
\newtheorem{prop}{Proposition}
\newtheorem{coro}{Corollary}
\newcommand{\mo}{\mathcal O}
\newcommand{\calP}{\mathcal P}
\newcommand{\calD}{\mathcal D}
\newcommand{\calL}{\mathcal L}
\newcommand{\calC}{\mathcal C}
\newcommand{\calW}{\mathcal W}
\DeclareMathOperator*{\argmax}{arg\,max}
\title{Convergence of Multi-Issue Iterative Voting under Uncertainty}
\author[1]{Joshua Kavner}
\author[2]{Reshef Meir}
\author[3]{Francesca Rossi}
\author[1]{Lirong Xia}
\affil[2]{Rensselaer Polytechnic Institute\\
	{\small\texttt{\{kavnej@rpi.edu,xial@cs.rpi.edu\}}}}
\affil[3]{Technion Israel Institute of Technology\\
	{\small\texttt{reshefm@ie.technion.ac.il}}}
\affil[4]{IBM T.J. Watson Research Center\\
	{\small\texttt{Francesca.Rossi2@ibm.com}}}
\begin{document}

\maketitle

\begin{abstract}
We study the effect of strategic behavior in iterative voting for multiple issues under uncertainty. We introduce a model synthesizing simultaneous multi-issue voting with \citet{Meir14:Local}'s local dominance theory and determine its convergence properties. 
After demonstrating that local dominance improvement dynamics may fail to converge, we present two sufficient model refinements that guarantee convergence from any initial vote profile for binary issues: constraining agents to have $\mathcal{O}$-legal preferences and endowing agents with less uncertainty about issues they are modifying than others. Our empirical studies demonstrate that although cycles are common when agents have no uncertainty, introducing uncertainty makes convergence almost guaranteed in practice.
\end{abstract}

\section{Introduction}

Consider a wedding planner who is deciding a wedding's banquet and wants to accommodate the party invitees' preferences. Suppose they have three issues: the main course (chicken, beef, or salmon), the paired wine (red or white), and the dessert cake's flavor (chocolate or vanilla). How should the planner proceed? 
On the one hand, the planner could request each attendee's (agent's) full preference ranking over the $d^p$ possibilities (alternatives), for $p$ issues with $d$ candidates each. However, this is computationally prohibitive and imposes a high cognitive cost for agents. 

On the other hand, the planner could only request agents' most preferred alternatives and decide each issue simultaneously. Although simpler, this option admits \emph{multiple election paradoxes} whereby agents can collectively select each of their least favored outcomes. For example, suppose three agents prefer $(1,1,0)$, $(1,0,1)$, and $(0,1,1)$ first respectively on three binary issues and all prefer $(1,1,1)$ last \cite{LacyNiou00}. Then the agents select $(1,1,1$) by majority rule on each issue independently.

A third approach is to decide the issues in sequence and have agents vote for their preferred alternative on the \emph{current} issue given the previously chosen outcomes \cite{Lang09:Sequential,Xia11:Strategic}. Still, the joint outcome may depend on the voting agenda and agents may be uneasy voting on the current issue if their preference depends on the outcomes of later issues \cite{Conitzer09:How}.

In this work, we study \emph{iterative voting} (IV) as a different yet natural method for deciding multiple issues \cite{Meir10:Convergence}. We elicit agents' most preferred alternatives and, given information about others' votes, allow agents to update their reports before finalizing the group decision. This approach combines the efficiency of simultaneous voting with the dynamics of sequential voting, thus incorporating information about agents' lower-ranked preferences without directly eliciting them. Like the former approach, agents only report their most preferred alternative. Like the latter approach, agents only update one issue at a time but are unrestricted in the order of improvements.

IV is an effective framework for its adaptability to various information and behavioral schemes. For example, agents may be fully aware of each other's votes, such as in online Doodle polls \cite{Zou15:Doodle}, and update to the myopic \emph{best response} of all others. Alternatively, agents may only have partial information, such as from imprecise opinion polls \cite{Reijngoud2012VoterRT} or latency if they can only periodically retrieve vote counts on a networked system.
We then take a strict uncertainty approach in which agents assume that any vote profile in a set is possible without assigning probabilities to them. Agents update to votes that \emph{locally dominate} their prior reports \cite{Meir14:Local}.

We ask two primary questions:
(1) \emph{Under what conditions does multi-issue IV converge?}
(2) \emph{How does introducing and increasing uncertainty affect the rate of convergence?}
The conclusion from previous research regarding convergence in single-issue IV is mixed, as the plurality and veto voting rules have strong guarantees, yet many other rules admit cycles \cite{meir2017iterative}.
This leaves us with mixed hope that multi-issue IV can converge, and if so, that it can solve other problems like multiple election paradoxes \cite{Xia11:Strategic}.
Furthermore, in contrast to the single-issue setting \cite{Meir15:Plurality}, uncertainty for multiple issues plays a double role. First, like the single-issue case, agents consider themselves as possibly pivotal on any issue that is sufficiently close to a tie. Second---and this part is new---agents may be uncertain regarding their own preferences on a particular issue, as this may depend on the outcomes of other uncertain issues.

\subsection{Our Contribution}
\label{sec:our_contribution}

On the conceptual side, we introduce a model that synthesizes prior work in local dominance strategic behavior, iterative voting, and simultaneous voting over multiple issues. This generalized model naturally captures both types of uncertainty discussed above. 



On the technical side, we first show that IV for either best response, without uncertainty, or local dominance improvement dynamics, with uncertainty, may not converge.
We then present two model refinements that prove sufficient to guarantee convergence for binary issues:
restricting agent preferences to be $\mo$-legal converges  because agents' preferences on issues are not interdependent; alternating uncertainty,
in which agents are more certain about the current issue than others, converges because fewer preference rankings yield improvement steps in this case.
These convergence results do not extend to the multi-candidate issues setting, as LDI plurality dynamics may cycle if agents have partial order preference information.




Our convergence results for binary issues also hold for a nonatomic variant of IV, in which agents are part of a very large population and arbitrary subsets of agents may change their vote simultaneously, establishing more general convergence results (see Appendix \ref{apx:non_atomic_model}).

We conclude with empirical evidence corroborating our findings that introducing uncertainty eliminates almost all cycles in IV for multiple binary issues. 
Our experiments further suggest IV improves the quality of equilibrium vote profiles relative to their respective truthful profiles, thus reducing multiple election paradoxes.
Increasing uncertainty yields faster convergence but degrades this welfare improvement. 

\subsection{Related Work}

Our model pulls insights from research across multi-issue voting, IV, and local dominance strategic behavior. Multi-issue voting is an extensively studied problem in economics and computer science with applications in direct democracy referendums, group planning and committee elections (see e.g., \citet{Lang16:Voting} for a survey). Our work follows research in agent strategic behavior by \citet{Lang07:Vote}, \citet{Lang09:Sequential}, \citet{Conitzer09:How}, \citet{Xia11:Strategic}.

Single-issue IV  was initially studied by \citet{Meir10:Convergence} for best response dynamics and the plurality social choice rule, whose authors bounded its guaranteed convergence rate. Subsequent work demonstrated that iterative veto converges \cite{Reyhani12:Best,Lev12:Convergence}, although many other voting rules do not \cite{Koolyk17:Convergence} unless agents are additionally restricted in their behavior \cite{Reijngoud2012VoterRT,Grandi13:Restricted,Obraztsova15:Convergence,Endriss16:Strategic}. 

This review already narrows down the possibility of convergence in the multi-issue setting. We therefore restrict our attention to plurality, extending the models of \citet{Meir14:Local}, \citet{Meir15:Plurality}.
Their research finds broad conditions for voting equilibrium to exist and guaranteed convergence for iterative plurality. In particular, \citet{Meir15:Plurality} studies a nonatomic model variation where agents have negligible impact on the outcome but multiple agents update their votes simultaneously, greatly simplifying the dynamics.



\citet{bowman14:potential} and \citet{grandi20:voting} empirically show for multiple binary issues that IV improves the social welfare of voting outcomes using
computer simulations and human experiments respectively. Our work augments this research by characterizing convergence in settings where agents do not have complete information. 


A related line of research studies  strategic behavior in epistemic voting games, when agents have uncertainty about the preferences or votes of others (see e.g., \citet{meir2018strategic} for a survey). Notably, \citet{Chopra04:Knowledge}, \citet{Conitzer11:Dominating}, \citet{Reijngoud2012VoterRT}, \citet{Ditmarsch13:strategic} each study the susceptibility and computational complexity of voting rules to local dominance improvement steps. 
Game-theoretic properties of strategic behavior for Gibbard-Satterthwaite games were studied by \citet{myerson1993theory}, \citet{grandi19:gibbard}, \citet{elkind20:cognitive}. Other forms of IV include work from \citet{airiau09:iterated}, \citet{Desmedt10:Equilibria}, \citet{Xia10:Stackelberg}.

\section{Preliminaries}
\label{sec:prelims}

\begin{paragraph}{Basic model.}
Let $\calP = \{1,2,\ldots,p\}$ be the set of $p$ issues over the joint domain $\calD = \times_{i=1}^p D_i$, where $D_i$ is the finite value domain of \emph{candidates} for issue $i$. We call the issues \emph{binary} if $D_i=\{0,1\}$ for each $i \in \calP$ or \emph{multi-candidate} otherwise.
Each of $n$ agents is endowed with a preference \emph{ranking} $R_j \in \calL(\calD)$, the set of strict linear orders over the $\times_{i=1}^p |D_i|$ alternatives. 
We call the collection of agents' preferences $P = (R_1, \ldots, R_n)$ a \emph{preference profile} and each agent's most preferred alternative their \emph{truthful} vote.
A \emph{vote profile} $a = (a_1, \ldots, a_n) \in \calD^n$ is a collection of \emph{votes}, where $a_j = (a^1_j, \ldots, a^p_j) \in \calD$ collects agent $j$'s  single-candidate vote per issue.
A \emph{resolute voting rule} $f:\calD^n \rightarrow \calD$ maps vote profiles to a unique outcome. 
We call $a \in \calD$ and $a^i \in D_i$ for $i \in \calP$ an \emph{alternative} or \emph{outcome} synonymously.
\end{paragraph}

\begin{paragraph}{Simultaneous plurality voting.}
A local voting rule, applied to each issue independently, is \emph{simultaneous} if issues' outcomes are revealed to agents at the same time. It is \emph{sequential} according to the order $\mo = \{o_1, \ldots, o_p\}$ if outcomes of each issue $o_i$ are revealed to agents prior to voting on the next issue $o_{i+1}$ \cite{LacyNiou00}. We focus on simultaneous plurality voting and adapt the framework of \citet{Xia11:Strategic}.

The plurality rule $f^i(a)$ applied to vote profile $a$ on issue $i$ only depends on the total number of votes for each of its candidates. We thus take the score of a candidate $c \in D_i$ as $s^i(c;a) = |\{j \leq n~:~a^i_j = c\}|$ and define the \emph{score tuple} $s(a) = (s^i(a))_{i \in \calP}$ as a collection of \emph{score vectors} $s^i(a) = (s^i(c;a))_{c \in D_i}$. We use the plurality rule $f(a) = (f^i(a))_{i \in \calP} \in \calD$, where $f^i(a) = \argmax_{c \in D_i} s^i(c;a)$, breaking ties lexicographically on each issue. We often use $s$, $s(a)$, and $a$ interchangeably for ease of notation.

Let $a_{-j}$ denote the vote profile without agent $j$ and $(a_{-j},\hat{a}_j)$ the profile $a$ by replacing $j$'s vote with the prospective vote $\hat{a}_j$. Then $s_{-j}$ and $s_{-j}+\hat{a}_j$ denote the corresponding adjusted score tuples.
\end{paragraph}

\begin{paragraph}{Preferential dependence.} 

A preference ranking is called \emph{separable} if its relative ordering of candidates in $D_i$, for each issue $i \in \calP$, are consistent across all outcomes of the other issues. 
This has the advantage that
such agents may express their preferences for any issue without knowing other issues' outcomes and avoid multiple-election paradoxes (see e.g., \citet{Xia11:Strategic}), but it is a very demanding assumption \cite{hodge2002separable}. Relaxing agents preferences to be $\mo$-legal maintains representation compactness without permitting arbitrary preferential dependencies \cite{Lang09:Sequential}.

Formally, for some order $\mo=\{o_1, \ldots, o_p\}$ over the issues, the preference ranking $R$ is called \emph{$\mo$-legal} if, given the outcomes of the prior issues $\{f^{o_1}, f^{o_2}, \ldots, f^{o_{i-1}}\}$, the relative ordering of candidates in $D_{o_i}$ is constant for any combination of outcomes of the later issues $\{f^{o_{i+1}},  \ldots, f^{o_p}\}$. Hence, the ranking of candidates $D_{o_i}$ in $R$ depends only, if at all, on the outcomes of issues prior to it in $\mo$.

The preference profile $P$ is called \emph{$\mo$-legal} if every ranking is $\mo$-legal for the same order $\mo$. 
Thus, the ranking $R$ is \emph{separable} if it is $\mo$-legal for \emph{any} order $\mo$. 

\end{paragraph}

\begin{ex}
Let there be $p=2$ binary issues and $n=3$ agents with preferences $P=(R_1, R_2, R_3)$ such that: 

$R_1 : (1,0) \succ_1 (0,0) \succ_1 (0,1) \succ_1 (1,1)$, 

$R_2 : (1,1) \succ_2 (0,0) \succ_2 (0,1) \succ_2 (1,0)$, and 

$R_3 : (0,0) \succ_3 (0,1) \succ_3 (1,0) \succ_3 (1,1)$.

The truthful vote profile $a = ( (1,0), (1,1),$ $(0,0) )$ consists of each agent's most preferred alternative.
The score tuple is then $s(a) = \{(1,2),(2,1)\}$, so with plurality $f(a)=(1,0)$.

Note that $R_1$ is $\mo$-legal for $\mo = \{2,1\}$: the agent always prefers $0 \succ 1$ on the second issue, yet their preference for the first issue depends on $f^2$. $R_3$ is separable, as the agent prefers $0 \succ 1$ on each issue independent of the other issue's outcome. $R_2$ is neither separable nor $\mo$-legal for any $\mo$.

Finally, agent $2$ can improve the outcome for themselves by voting for $\hat{a}_2 = (0,1)$ instead of $a_2 = (1,1)$. The adjusted score tuple is $s_{-2} = \{(1,1),(2,0)\}$, so $s_{-2} + \hat{a}_2 = \{(2,1),(2,1)\}$ and $f(s_{-2} + \hat{a}_2) = (0,0) \succ_2 (1,0) = f(a)$.
\label{ex:prelims}
\end{ex}

\begin{paragraph}{Improvement dynamics.}
We implement the iterative voting (IV) model introduced by \citet{Meir10:Convergence} and refined for uncertainty by \citet{Meir14:Local}, \citet{Meir15:Plurality}.
Given agents' truthful preferences $P$ and an initial vote profile $a(0)$, we consider an iterative process of vote profiles $a(t) = (a_1(t), \ldots, a_n(t))$ that describe agents' reported votes over time $t \geq 0$. 
For each round $t$, a \emph{scheduler} $\phi$ chooses an agent $j$ to make an \emph{improvement step} over their prior vote $a_j(t)$ by applying a specified \emph{response function} $g_j:\calD^n \rightarrow \calD$. Each agent's response implicitly depends on their preferences and belief about the current \emph{real} vote profile, but they are not aware of others' private preferences. All other votes besides $j$'s remain unchanged.

A scheduler is broadly defined as a mapping from sequences of vote profiles to an agent with an improvement step in the latest vote profile \cite{Apt12:Classification}. An improvement step must be selected if one exists, and a vote profile where no improvement step exists (i.e., $g_j(a) = a_j ~\forall j \leq n$) is called an \emph{equilibrium}.
The literature on game dynamics considers different types of response functions, schedulers, initial profiles, and other assumptions (see e.g., \citet{fudenberg09:learning}, \citet{Marden07:Regret}, \citet{Bowling05:Convergence}, \citet{Young93}). This means that there are multiple levels in which a voting rule may guarantee convergence to an equilibrium \cite{meir2017iterative}. 
In this work, we study two response functions: \emph{best response (BR)}, without uncertainty, and \emph{local dominance improvements (LDI)}, with uncertainty. For both dynamics, we restrict agents to only changing
their vote on a single \emph{current} issue $i \in \calP$ per round, as determined by the scheduler $\phi$. 
We therefore have the following form of convergence, as described by \citet{kukushkin11:acyclicity}, \citet{Monderer96:Potential}, \citet{milchtaich96:congestion}:
\begin{dfn}
An IV dynamic has the \emph{restricted-finite improvement property} if every improvement sequence is finite from any initial vote profile for a given response function.
\label{dfn:convergence}
\end{dfn}

Under BR dynamics, agents know the real score tuple $s(a)$.

\begin{dfn}[Best response]
Given the vote profile $a$, $g_j(a) := \hat{a}_j$ which yields agent $j$'s most preferred outcome of the set $\{f(a_{-j}, \tilde{a}_j): \tilde{a}^i_j \in D_i, \tilde{a}^k_j = a^k_j~\forall k \neq i\}$, defaulting as $a_j$ if the best outcome is the same.
\label{dfn:best_response}
\end{dfn}

LDI dynamics are based on the notions of \emph{strict uncertainty} and \emph{local dominance} \cite{Conitzer11:Dominating,Reijngoud2012VoterRT}. Let $S \subseteq \times_{i=1}^p \mathbb{N}^{|D_i|}$ be a set of score tuples that, informally, describes agent $j$'s uncertainty about the real score tuple $s(a)$.
An LDI step to a prospective vote $\hat{a}_j$ is 
(1) weakly better off than their original $a_j$ for every $v \in S$, (2) strictly better off for some tuple in $v \in S$, and (3) not worse off for any  $v \in S$. This is formally defined as follows.

\begin{dfn}
The vote $\hat{a}_j$ \emph{$S$-beats} $a_j$ if there is at least one score tuple $v \in S$ such that $f(v+\hat{a}_j) \succ_j f(v+a_j)$.
The vote $\hat{a}_j$ \emph{$S$-dominates} $a_j$ if both (I) $\hat{a}_j$ $S$-beats $a_j$; and (II) $a_j$ does not $S$-beat $\hat{a}_j$.
\label{dfn:s_dom}
\end{dfn}

\begin{dfn}[Local dominance improvement]
Given the vote profile $a$ and agent $j$, let $LD^i_j$ be the set of votes that $S$-dominate $a_j$, only differ from $a_j$ on the $i^{th}$ issue, and are not themselves $S$-dominated by votes differing from $a_j$ only on the $i^{th}$ issue. Then
$g_j(a) = a_j$ if $LD^i_j = \emptyset$ and $\hat{a}_j \in LD^i_j$ otherwise.
\label{dfn:local_dom}
\end{dfn}

This definition distinguishes from (weak) LDI in \citet{Meir15:Plurality}, in that agents may change their votes consecutively but only on different issues.
Note that $S$-dominance is transitive and antisymmetric, but not complete, so an agent $j$ may not have an improvement step.
To fully define the model, we need to specify $S$ for every $a$. For example, if $S=\{s(a_{-j})\}$ and each $j$ has no uncertainty about the real score tuple, then LDI coincides with BR and an equilibrium coincides with \emph{Nash equilibrium}. Therefore, LDI broadens BR dynamics.
\end{paragraph}

\begin{paragraph}{Distance-based uncertainty.}
Agents in the single-issue model constructed their uncertainty sets using \emph{distance-based uncertainty}, in which all score vectors close enough to the current profile were believed possible \cite{Meir14:Local,Meir15:Plurality}. We adapt this to the multi-issue setting by assuming agents uphold candidate-wise distance-based uncertainty over score vectors for each issue independently.

For any issue $i \in \calP$, let $\delta(s^i(a),\tilde{s}^i(a))$ be a distance measure for score vectors for any vote profile $a$. This measure is \emph{candidate-wise} if it can be written as $\delta(s^i(a),\tilde{s}^i(a)) = \max_{c \in D_i} \hat{\delta}(s^i(c;a),\tilde{s}^i(c;a))$ for some monotone function $\hat{\delta}$. For example, the multiplicative distance (where $\hat{\delta}(s, \tilde{s}) = \max\{ \tilde{s}/s, s/\tilde{s}\}-1$) and the $\ell_\infty$ metric (where $\hat{\delta}(s, \tilde{s}) = |s-\tilde{s}|$) are candidate-wise.

Given the vote profile $a$ and issue $i \in \calP$, 
we model agent $j$'s uncertainty about the adjusted score vector $s^i_{-j}$ by the \emph{uncertainty score set} 
$\tilde{S}^i_{-j}(s;r^i_j) := \left\{v^i: \delta(v^i, s^i_{-j}) \leq r^i_j \right\}$ with an \emph{uncertainty parameter} $r^i_j$. That is, given all other agents' votes $a_{-j}^i$, agent $j$ is not sure what the real score vector is $v^i \in \tilde{S}_{-j}^i(s;r^i_j)$. We take $\tilde{S}_{-j}(s,r_j) := \times_{i=1}^p \tilde{S}^i_{-j}(s;r^i_j)$ for $r_j=(r_j^i)_{i \in \calP}$, and drop the parameters if the context is clear.


\end{paragraph}

\begin{ex}
Let there be $p=2$ binary issues and $n=13$ agents. Define the vote profile $a$  such that $s(a)=\{(8,5),(9,4)\}$ and $a_j = (0,1)$ for some agent $j$. Suppose $j$ uses the $\ell_\infty$ uncertainty metric with $(r^1_j, r^2_j) = (1,1)$. Then the uncertainty score set without $j$'s vote is:

$
\tilde{S}_{-j} = \{(6,7,8) \times (4,5,6)\} \times \{(8,9,10) \times (2,3,4)\}
$

Consider the prospective vote $\hat{a}_j = (1,1)$. Then the set of possible outcomes is
$
\{f(v+\hat{a}_j)~:~v \in \tilde{S}_{-j}\} = \{0,1\} \times \{0\}.
$



\label{ex:comb_LD}
\end{ex}

\begin{paragraph}{Remark.} Notice that in any vote profile reachable via BR or LDI dynamics from the truthful vote profile, no agent with fully separable preferences will have an improvements step. This follows from the definitions of BR and LDI.
\end{paragraph}

\section{Best-Response Dynamics}

Given the vote profile $a$, consider agent $j$ changing their vote $a_j$ on issue $i$ to the prospective vote $\hat{a}_j$. Under BR dynamics, without uncertainty, $j$ changes their vote only if they can feasibly improve the outcome $f(a)$ to one more favorable with respect to  $R_j$. This happens under two conditions. First, $j$ must be \emph{pivotal} on the $i^{th}$ issue, meaning that changing their vote will necessarily change the outcome.
Second, $j$ must be preferential to change $i$ by voting for $\hat{a}^i_j$ over $a^i_j$ given the outcomes of the other issues $\calP \backslash \{i\}$. Agent $j$'s preferences are always well-defined since they know every issue's real outcome. Thus BR dynamics behave similar to the single-issue setting, which we recall converges \cite{Meir10:Convergence}. However, in the multi-issue setting, agents' preferences on each issue may change as other issues' outcomes change. This entails the possibility of a cycle, as declared in the following proposition and proved with the subsequent example. 

\begin{prop}
BR dynamics for multiple issues may not converge, even if issues are binary.
\end{prop}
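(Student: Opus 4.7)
The plan is to prove this by exhibiting an explicit counterexample: a preference profile $P$ over $p=2$ binary issues, an initial vote profile $a(0)$, and a scheduler $\phi$ that together produce an infinite sequence of BR improvement steps. The intuition is that BR can only cycle when agents' preferences across issues are interdependent (otherwise each issue reduces to single-issue iterative plurality, which converges by \citet{Meir10:Convergence}). So the counterexample must centrally use non-separable rankings, and in light of the Remark just before this section, none of the agents driving the cycle can be separable.

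The first step is to design the preferences. I would take a small number of agents (three to five should suffice for $p=2$) and assign to each a preference ranking in which the relative order of the two candidates on issue $1$ is reversed when the outcome on issue $2$ flips, and symmetrically on issue $2$. Concretely, I want at least one agent whose ranking puts $(1,0)\succ(0,0)$ but $(0,1)\succ(1,1)$, and another agent whose conditional preferences on the other issue behave analogously. This way, every time an outcome flips, some agent who was voting truthfully suddenly has a profitable deviation on the other issue.

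Next I would choose the initial profile $a(0)$ and the scheduler so that scores on both issues are near-tied (each issue decided by a one-vote margin), guaranteeing that a single agent flipping one component is pivotal. I would then trace through a $4$-cycle of outcomes $(0,0)\to(1,0)\to(1,1)\to(0,1)\to(0,0)$, at each step identifying the unique agent $j_t$ whom $\phi$ selects, verifying (i) that changing just the $i$-th coordinate of $a_{j_t}$ changes $f^i$, (ii) that the resulting joint outcome is strictly higher than the current outcome in $R_{j_t}$, and (iii) that this is indeed agent $j_t$'s best response restricted to changes on issue $i$. Because the vote profile after four such flips agrees on both components with $a(0)$, repeating the schedule yields an infinite improvement sequence, contradicting \Cref{dfn:convergence}.

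The main obstacle is the combinatorial bookkeeping: I have to pick preferences and starting scores so that (a) pivotality holds simultaneously on both issues at every step of the cycle, (b) each flipping agent is the one whose BR the scheduler is entitled to pick (i.e., they genuinely have an improvement), and (c) after four flips the scores return exactly to their initial values so the cycle closes. Keeping the example minimal while satisfying all three constraints is the delicate part; I would expect to need agents with asymmetric rankings (some preferring $0$ by default, some preferring $1$) together with one or two ``conditional'' agents who drive the flips, and would verify correctness by tabulating the score tuple $s(a(t))$ and the outcome $f(a(t))$ at each of the four stages.
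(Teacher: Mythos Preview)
Your proposal is correct and matches the paper's approach essentially verbatim: the paper exhibits $p=2$ binary issues, $n=3$ agents (two with non-separable rankings driving the cycle, one separable agent who never moves), and traces exactly the four-step outcome cycle $(0,0)\to(1,0)\to(1,1)\to(0,1)\to(0,0)$ with the scheduler alternating between agents $1$ and $2$. The paper additionally notes that at each profile in the cycle no other BR step is available, so in fact \emph{no} scheduler can lead to convergence---a slight strengthening beyond what you outlined, but not a different argument.
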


\begin{ex}
Let there be $p=2$ binary issues and $n=3$ agents without uncertainty and the following preferences:

$R_1: (0,1) \succ_1 (1,1) \succ_1 (1,0) \succ_1 (0,0)$, 

$R_2: (0,0) \succ_2 (0,1) \succ_2 (1,1) \succ_2 (1,0)$, and 

$R_3: (1,0) \succ_3 (1,1) \succ_3 (0,0) \succ_3 (0,1)$.


\begin{table}[h!]
\centering
 \begin{tabular}{|c|c|c|c|c|} 
 \hline
 Agent $j$ & $a_j(0)$ & $a_j(1)$ & $a_j(2)$ & $a_j(3)$ \\ [0.5ex]
 \hline
 $1$ & $(0,1)$ & $(1,1)$ & $(1,1)$ & $(0,1)$ \\
 $2$ & $(0,0)$ & $(0,0)$ & $(0,1)$ & $(0,1)$ \\
 $3$ & $(1,0)$ & $(1,0)$ & $(1,0)$ & $(1,0)$ \\
 \hline \hline
 $f(a)$ & $(0,0)$ & $(1,0)$ & $(1,1)$ & $(0,1)$ \\
 \hline
 \end{tabular}
 \caption{Agents' votes for $a(0)$ (truthful), $a(1)$, $a(2)$, and $a(3)$.}
 \label{tab:BR_cycle}
\end{table}

Table \ref{tab:BR_cycle} demonstrates a cycle generated by BR dynamics from the truthful vote profile $a(0)$. The order of improvement steps is $j=(1,2,1,2)$. No other BR step is possible from any profile in the cycle, demonstrating that no agent scheduler can lead to convergence. Notice that agent $3$ never changes their vote since their ranking is fully separable.


\label{ex:BR_cycle}
\end{ex}


\section{Local Dominance Improvement Dynamics}

\subsection{Non-Convergence}

LDI dynamics broadens best-response, but it is initially unclear how increasing agents' uncertainty parameters will affect their improvement steps and whether this eliminates the possibility of cycles. 
Under LDI dynamics, with uncertainty, agent $j$ changes their vote $a_j$ on issue $i$ to the prospective vote $\hat{a}_j$ from vote profile $a$ with two conditions similar to BR dynamics: if they (I) believe they may be pivotal and (II) can improve the outcome with respect to $R_j$ (corresponding to $S$-dominance). 
First, intuitively, it would seem that uncertainty over $j$’s current issue $i$ affects whether they believe they can change $i$'s outcome. Notice that if issue $i$ has a unique winning candidate and no runner-up, then $j$ has no best-response but may have an LDI step only if their uncertainty parameter is large enough. 
Second, unlike BR dynamics and the single-issue setting \cite{Meir14:Local,Meir15:Plurality}, $j$'s preference over candidates of issue $i$ may depend on the outcomes of other issues, which $j$ may be uncertain about. It stands to reason that the more uncertainty $j$ has over other issues, the less clarity the agent has over their own preference for issue $i$'s candidates. 

Despite these two apparent parallels, we observe for multi-candidate issues that the relationship between uncertainty parameter and existence of LDI steps is not monotonic. Example \ref{ex:LDI_steps_different_uncertainties} in Appendix \ref{apx:ex} demonstrates that different sets of prospective votes $LD^i_j$ in the same vote profile $a$ are non-comparable for the same agent with different uncertainty parameters. On the other hand, LDI dynamics have the aforementioned structure with binary issues, as we prove in Theorem \ref{thm:improvement_step_relations}. Still, this relationship has counter-acting forces: increasing uncertainty on issue $i$ may only add LDI steps over issue $i$ but may only eliminate steps on all other issues.



\begin{thm}
Given binary issues, consider agent $j$ changing their vote on issue $i$ in vote profile $a$ with one of three uncertainty parameters: $r_j$, $q_j$, or $\hat{q}_j$. The parameters $r_j$ and $q_j$ only differ on issue $k \neq i$ such that $r_j^k < q_j^k$; $r_j$ and $\hat{q}_j$ only differ on issue $i$ such that $r_j^i < \hat{q}_j^i$. Let $LD^i_j$, $LQ^i_j$, and $\hat{LQ}^i_j$ denote $j$'s possible LDI steps according to each of the parameters. Then $LQ_j^i \subseteq LD_j^i \subseteq \hat{LQ}^i_j$.

\label{thm:improvement_step_relations}
\end{thm}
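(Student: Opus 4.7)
The plan is to exploit the special structure of binary issues together with the product structure of distance-based uncertainty to reduce the theorem to elementary set-theoretic comparisons.

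First, I would invoke the binary-issue hypothesis to simplify the sets $LD_j^i$, $LQ_j^i$, $\hat{LQ}_j^i$. Since issue $i$ is binary, there is only one vote $\hat{a}_j$ that differs from $a_j$ solely on issue $i$---namely the flip of $a_j^i$---so each of these sets is either $\{\hat{a}_j\}$ or empty, and the ``not $S$-dominated by other votes differing on $i$'' clause of Definition \ref{dfn:local_dom} is vacuous. The theorem thus reduces to proving the chain $\hat{a}_j$ $S_q$-dominates $a_j$ $\Rightarrow$ $\hat{a}_j$ $S_r$-dominates $a_j$ $\Rightarrow$ $\hat{a}_j$ $S_{\hat{q}}$-dominates $a_j$, where $S_r, S_q, S_{\hat{q}}$ denote the uncertainty sets determined by the three parameter vectors.

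Next, I would decompose $S$-dominance into two orthogonal ingredients. Because $\hat{a}_j$ and $a_j$ agree off issue $i$, the outcomes $f(v+\hat{a}_j)$ and $f(v+a_j)$ can differ only on issue $i$, and they differ precisely when $j$ is pivotal on $i$ at $v$---an event depending only on $v^i$. For such pivotal $v$, whether $j$ strictly prefers the flipped outcome depends only on the fixed outcomes of the other issues, $F^{-i}(v) = (f^k(v+a_j))_{k\neq i}$, which is itself a function solely of $(v^k)_{k\neq i}$; let $B((v^k)_{k\neq i}) \in \{0,1\}$ record this preference. Then $\hat{a}_j$ $S$-dominates $a_j$ iff (I) some pivotal $v \in S$ satisfies $B=1$ and (II) every pivotal $v \in S$ satisfies $B=1$.

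Since $\tilde{S}_{-j}(s,r_j) = \prod_k \tilde{S}^k_{-j}(s; r_j^k)$ is a product set, the pivotality constraint on $v^i$ and the $B$-constraint on $(v^k)_{k\neq i}$ decouple completely. For $LQ_j^i \subseteq LD_j^i$: $S_r$ and $S_q$ share the $i$-factor but $S_q$'s remaining factors are supersets, so (II) under $S_q$ immediately implies (II) under $S_r$; moreover (I) under $S_q$ exhibits a pivotal $(v^\ast)^i \in \tilde{S}_{-j}^i(s; r_j^i)$, and pairing it with any $(v^k)_{k\neq i} \in \prod_{k\neq i} \tilde{S}_{-j}^k(s; r_j^k) \subseteq \prod_{k\neq i} \tilde{S}_{-j}^k(s; q_j^k)$ yields a pivotal $v \in S_r$ which by (II) under $S_q$ also satisfies $B=1$, giving (I) under $S_r$. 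For $LD_j^i \subseteq \hat{LQ}_j^i$: $S_r \subseteq S_{\hat{q}}$ since they share their non-$i$ factors, so (I) under $S_r$ lifts directly to (I) under $S_{\hat{q}}$; and the pivotal $(v^\ast)^i \in \tilde{S}_{-j}^i(s; r_j^i)$ provided by (I) under $S_r$, combined with (II) under $S_r$, forces $B \equiv 1$ on the entire non-$i$ product, which coincides with the non-$i$ product of $S_{\hat{q}}$---yielding (II) under $S_{\hat{q}}$.

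The main obstacle is that $S$-dominance is not monotone in $S$: the existential ``$S$-beats'' clause is favored by enlarging $S$, while the universal ``not $S$-beaten'' clause is favored by shrinking it. What rescues the argument is the product structure of distance-based uncertainty together with the decoupling of pivotality from $B$: perturbing uncertainty along a single direction lets us transport witnesses between the three sets without breaking either clause. The binary-issue hypothesis is essential here---it collapses the set of flips on $i$ to a single vote---and the non-monotonicity exhibited in Example \ref{ex:LDI_steps_different_uncertainties} confirms that the argument cannot be extended to the multi-candidate setting.
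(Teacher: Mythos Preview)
Your proof is correct and shares its core strategy with the paper: both invoke the binary-issue hypothesis to collapse $LD^i_j$ to the bare $S$-dominance set (a single candidate flip on issue $i$, so the ``not itself $S$-dominated'' clause is vacuous), and both exploit the product form of $\tilde S_{-j}$ to transport the beating/not-beaten conditions among the three parameter choices. Where you differ is in making the decoupling explicit---a pivotality constraint on $v^i$ and a preference indicator $B$ on $(v^k)_{k\neq i}$---whereas the paper argues directly with witness score tuples. Your decoupling pays off in the second inclusion $LD^i_j \subseteq \hat{LQ}^i_j$: the paper asserts that condition (II) passes from $\tilde S_{-j}(a;r_j)$ to the strictly larger $\tilde S_{-j}(a;\hat q_j)$ ``by point (2) above,'' which on its face is the wrong quantifier direction (non-existence over a subset does not yield non-existence over a superset). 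What is actually required---and what your argument supplies---is to combine the pivotal $(v^\ast)^i$ furnished by (I) under $S_r$ with the fact that the non-$i$ factors of $S_r$ and $S_{\hat q}$ coincide, forcing $B\equiv 1$ on those shared factors and hence on every pivotal $v\in S_{\hat q}$. The paper moreover claims its intermediate inclusion $D^i_j\subseteq \hat Q^i_j$ holds for arbitrary multi-candidate issues, which is not correct in general; your explicitly binary argument avoids that overreach.
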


We prove the theorem in two parts by demonstrating that if a vote $\hat{a}_j~\tilde{S}_{-j}(a;q_j)$-dominates $a_j$, then it must $\hat{a}_j~\tilde{S}_{-j}(a;r_j)$-dominates $a_j$; likewise, this implies that $\hat{a}_j~\tilde{S}_{-j}(a;\hat{q}_j)$-dominates $a_j$. Each of these relationships arise as a result of $\tilde{S}_{-j}(a;r_j) \subseteq \tilde{S}_{-j}(a;q_j)$ and $\tilde{S}_{-j}(a;r_j) \subseteq \tilde{S}_{-j}(a;\hat{q}_j)$. For binary issues, this is sufficient and we do not need to show that $\hat{a}_j$ is not $\tilde{S}_{-j}(a;r_j)$- or $\tilde{S}_{-j}(a;\hat{q}_j)$-dominated. This may not be the case for multi-candidate issues, as previously stated in Example \ref{ex:LDI_steps_different_uncertainties}. The full proof may be found in Appendix \ref{apx:convergence_proofs}. 


In sum, we find that enabling agents to have uncertainty over issues does not eliminate the possibility of a cycle, unlike the single-issue setting, as declared in the following proposition and proved with Example \ref{ex:LDI_cycle} in Appendix \ref{apx:ex}.

\begin{prop}
LDI dynamics with multiple issues may not converge, even if agents have the same constant uncertainty parameters and issues are binary.
\end{prop}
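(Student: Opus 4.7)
The plan is to exhibit a concrete counterexample: a preference profile and an initial vote profile under which LDI dynamics cycle for a fixed constant uncertainty parameter shared by every agent, with only binary issues. My starting point would be the BR cycle in Example \ref{ex:BR_cycle}, which uses three agents over two binary issues and exploits the fact that the two cycling agents have preferences that are neither separable nor $\mo$-legal. Since LDI with zero uncertainty coincides with BR, the task is to lift that cycle to the uncertain setting without destroying any of its four improvement steps.

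The first step is to dilate the example by adding a controlled number of ``padding'' voters with fully separable preferences. By the Remark at the end of Section~\ref{sec:prelims}, such voters never have an improvement step under LDI and so will sit fixed in the background. I would choose their votes so that, in each of the four profiles of the would-be cycle, the score gap on the issue that is \emph{not} currently being modified is strictly larger than the uncertainty radius $r$, while the score gap on the current issue remains within $r$. Then for every $v \in \tilde{S}_{-j}$ the winner of the non-current issue is pinned to its real winner, so the agent's evaluation of $f(v+\hat a_j)$ versus $f(v+a_j)$ depends only on how the vote perturbs the current issue. In effect the multi-issue $S$-dominance test collapses to a single-issue $S$-dominance test, parametrized by the fixed outcome of the other issue.

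The second step is to verify, profile by profile, that the four cycling moves of Example~\ref{ex:BR_cycle} remain LDI steps in this padded instance. For each scheduled agent $j$ and current issue $i$, I would check Definition~\ref{dfn:s_dom}: (i) there exists some $v \in \tilde{S}_{-j}$ at which $\hat a_j$ flips the winner of issue $i$ to $j$'s preferred candidate (which holds because the true score on $i$ is within $r$ of a tie), and (ii) there is no $v \in \tilde{S}_{-j}$ at which $a_j$ produces a strictly better outcome than $\hat a_j$ (which holds because on issue $i$ the agent strictly prefers the flipped outcome conditional on the fixed winners of the non-current issue). Theorem~\ref{thm:improvement_step_relations} is the right conceptual justification here: shrinking uncertainty on the non-current issue only preserves LDI steps on $i$, while having enough uncertainty on $i$ guarantees the perceived pivotal event.

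The main obstacle will be discharging (ii) uniformly across all four cycle states. The cycling agents have preferences whose ranking over issue $i$'s candidates \emph{reverses} depending on the other issue's outcome, so if the uncertainty set on the non-current issue ever included both winners, condition (ii) could fail and the LDI move would be blocked. I will therefore need to calibrate the padding voters and the common parameter $r$ simultaneously across all four profiles of the cycle: after each scheduled vote change, the score gap on the newly non-current issue must still exceed $r$. A finite case analysis over the four cycle states suffices, and the resulting explicit instance is what Example \ref{ex:LDI_cycle} in Appendix \ref{apx:ex} is meant to record.
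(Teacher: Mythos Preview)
Your high-level plan mirrors the paper's, but there is a concrete gap in the ``padding only'' step. You want, at every state of the cycle, the non-current issue's winner to be pinned (its score margin exceeding the uncertainty radius). But across the cycle the non-current issue's winner must \emph{flip}: at $a(0)$ agent~$1$ moves on issue~$1$ and needs $f^2$ pinned to~$0$, while at $a(2)$ agent~$1$ moves again and needs $f^2$ pinned to~$1$. In the Example~\ref{ex:BR_cycle} template, the only voter who changes her vote on issue~$2$ between these two states is agent~$2$---a single voter---so the score difference on issue~$2$ swings by exactly~$2$. Going from a margin of at least $2r$ for candidate~$0$ to a margin of at least $2r$ for candidate~$1$ requires a swing of at least $4r$, which forces $r=0$ and collapses back to BR. The separable padding voters never change their votes, so they contribute nothing to this swing; no choice of padding can repair it. The same obstruction applies symmetrically to issue~$1$.

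The paper's Example~\ref{ex:LDI_cycle} avoids this by replicating the \emph{cycling} agents themselves: it uses three copies of $R_1$ and five copies of $R_2$ (plus separable padding of Types~3 and~4). When all five Type~$2$ agents flip their vote on issue~$2$ in sequence, the score difference on issue~$2$ swings by~$10$, which is enough to re-pin $f^2$ in the opposite direction even with $r^2=2$; similarly the three Type~$1$ agents generate enough swing on issue~$1$. So the missing idea is not padding but \emph{duplicating the non-separable agents}; once you add that, your profile-by-profile verification of the four LDI moves (checking $S$-beats and not-$S$-beaten, with the non-current issue's possible-winner set reduced to a singleton) goes through essentially as you describe.
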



\subsection{Strategic responses and $\mo$-legal preferences}

We next present two model refinements that prove sufficient to guarantee convergence for binary issues: $\mo$-legal preferences and a specific form of dynamic uncertainty. 

We are motivated by observing in Examples \ref{ex:BR_cycle} and \ref{ex:LDI_cycle} that cycles appear due to agents' interdependent preferences among the issues. For simplicity, consider two issues $i$ and $k$ in which no agent has an LDI step on issue $i$. LDI steps on issue $k$ may change the set of possible winning candidates (see Definition \ref{dfn:possible_potential_winners}), thus transforming agents' preferences over the candidates in $i$. This enables agents to have LDI steps over issue $i$, even if they may no longer have improvement steps over $k$.
It stands to reason that eliminating interdependent preferences by restricting agents to have $\mo$-legal preference rankings would guarantee convergence. 

We prove this is the case in Theorem \ref{thm:o_legal_converge}. We first introduce a characterization about agents' strategic responses, extending a lemma from \citet{Meir15:Plurality} to the multi-issue setting.

\begin{dfn}
\label{dfn:possible_potential_winners}
Agent $j$ believes a candidate $c$ on issue $i$ is a \emph{possible winner} if there is some score vector where $c$ wins. Formally, 

$W^i_j(a) = \{c \in D_i~:~ \exists v \in \tilde{S}_{-j}(a; r_j)~\text{s.t.}~f^i(v+a_j)=c\}$.

In contrast, $j$ calls $c$ a \emph{potential winner} if there is some score vector in which they can vote to make $c$ win: 

$H^i_j(a) = \{c \in D_i~:~ \exists v \in \tilde{S}_{-j}(a; r_j)~\text{and}~\hat{a}_j~\text{s.t.}~\hat{a}^i_j = c, \hat{a}^k_j = a^k_j~\forall k \neq i,~\text{s.t.}~f^i(v+\hat{a}_j)=c\}$.

The set of real potential winners is denoted: 
$H^i_0(a) = \{c \in D_i~:~ f^i(s_{-j}+\hat{a}_j) = c ~ \text{where}~ \hat{a}^i_j=c, \hat{a}^k_j = a^k_j~\forall k \neq i\}$.
\end{dfn}


By this definition, $W^i_j(a) \subseteq H^i_j(a)$. \footnote{
Without uncertainty, $H^i_j(a)$ (or $H^i_j(a) \cup \{a^i_j\}$ if adding a vote to $a^i_j$ makes it win) is also known as the \emph{chasing set} (excluding $f(a)$) \cite{Rabinovich15:Analysis} or \emph{potential winner set} (including $f(a)$) \cite{Kavner21:strategic} on issue $i$. 
$H^i_j(a)$ coincides with \citet{Meir14:Local} and \citet{Meir15:Plurality}'s definition of a possible winner ``$W_j(s)$.''
}  Denote by $\calW^{-i}(a; r_j) = \times_{k \in \calP \backslash \{i\}} W^k_j(a)$ the set of possible winning candidates on all issues besides $i$, from agent $j$'s perspective with uncertainty parameter $r_j$.

\begin{lem}
Consider an LDI step $a_j \xrightarrow{j} \hat{a}_j$ over issue $i$ from vote profile $a$ by agent $j$ with uncertainty parameter $r_j$. Then either (1) $a^i_j \notin H^i_j(a)$; or for every combination of possible winners in $\calW^{-i}(a; r_j)$, either (2) $a^i_j \prec_j b$ for all $b \in H^i_j(a)$; or (3) $r^i_j = 0$, $\{a^i_j, \hat{a}^i_j\} \subseteq H_0(a)$ and $\hat{a}^i_j \succ_j a^i_j$.
\label{lem:lem3_multi}
\end{lem}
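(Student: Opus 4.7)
\emph{Proof plan.} The plan is to assume an LDI step $a_j \to \hat{a}_j$ over issue $i$ exists and do a case analysis. By Definitions \ref{dfn:s_dom}--\ref{dfn:local_dom}, there is some $v^\star \in S := \tilde{S}_{-j}(a; r_j)$ with $f(v^\star + \hat{a}_j) \succ_j f(v^\star + a_j)$, while no $v \in S$ satisfies $f(v + a_j) \succ_j f(v + \hat{a}_j)$. If $a^i_j \notin H^i_j(a)$, condition (1) is immediate, so I assume $a^i_j \in H^i_j(a)$; the remaining analysis splits on whether $r^i_j = 0$.

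\emph{Case $r^i_j = 0$.} Here the agent knows the score on issue $i$ exactly, so $\tilde{S}^i_{-j} = \{s^i_{-j}\}$ and $H^i_j(a) = H_0(a) \ni a^i_j$. Because $a_j$ and $\hat{a}_j$ agree on every issue $k \neq i$, for each $v \in S$ the outcomes $f(v + a_j)$ and $f(v + \hat{a}_j)$ can differ only at issue $i$, with deterministic $i$-th components $a^i_j$ and $f^i(s^i_{-j} + \hat{a}^i_j)$. For $\hat{a}_j$ to $S$-beat $a_j$ at $v^\star$, the $i$-th component must actually change, which in the binary setting forces $\hat{a}^i_j \in H_0(a)$. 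The $S$-dominance inequality at every $v \in S$ then projects onto issue $i$ to give $\hat{a}^i_j \succeq_j a^i_j$ conditional on any realisable outcome on the other issues; strictness of $\succ_j$ and $\hat{a}^i_j \neq a^i_j$ upgrade this to $\hat{a}^i_j \succ_j a^i_j$, so condition (3) holds.

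\emph{Case $r^i_j > 0$.} I would prove (2) by contradiction. Suppose there exist $w^{-i} \in \calW^{-i}(a; r_j)$ and $b^* \in H^i_j(a) \setminus \{a^i_j\}$ with $(w^{-i}, a^i_j) \succ_j (w^{-i}, b^*)$ (strictness follows from $b^* \neq a^i_j$ and the linearity of $\succ_j$). The target is to construct a witness $v \in S$ for which $f(v + a_j) = (w^{-i}, a^i_j)$ and $f(v + \hat{a}_j) = (w^{-i}, b^*)$, which would give $f(v + a_j) \succ_j f(v + \hat{a}_j)$ and contradict $\hat{a}_j$ $S$-dominating $a_j$. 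For each $k \neq i$ I pick $v^k \in \tilde{S}^k_{-j}$ realising $w^k$ when $j$ casts the unchanged vote $a^k_j = \hat{a}^k_j$; this is possible because $w^k \in W^k_j(a)$. For issue $i$ I pick $v^i \in \tilde{S}^i_{-j}$ pivotal between $a^i_j$ and $\hat{a}^i_j$: both $a^i_j$ and (in the binary case) $\hat{a}^i_j = b^*$ lie in $H^i_j(a)$, so $\tilde{S}^i_{-j}$ contains score vectors favouring each candidate, and convexity of the candidate-wise ball combined with $r^i_j > 0$ yields a pivotal $v^i$ that realises both claimed outcomes.

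The main technical hurdle is this pivotal construction of $v^i$: it relies on $r^i_j > 0$, on the convexity (and integer structure) of candidate-wise uncertainty balls, and on the simultaneous membership of $a^i_j$ and $b^*$ in $H^i_j(a)$. Tie-breaking on issue $i$ must be handled carefully so that the constructed $v^i$ truly witnesses the required winners under both $a_j$ and $\hat{a}_j$. Extending the argument beyond binary issues — where $b^*$ need not coincide with $\hat{a}^i_j$ and where switching votes may shift the winner to a third candidate not in $H^i_j(a)$ — would require additional case work, consistent with the paper's focus on the binary setting for the subsequent convergence theorem.
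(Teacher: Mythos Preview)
Your argument is sound for the binary setting, but it does not prove the lemma as stated: the lemma is not restricted to binary issues, and you explicitly lean on that restriction in both cases (forcing $\hat{a}^i_j \in H^i_0(a)$ when $r^i_j=0$, and identifying $b^\ast$ with $\hat{a}^i_j$ when $r^i_j>0$). Since the paper only \emph{applies} the lemma under binary issues (Theorems~\ref{thm:o_legal_converge} and~\ref{thm:dynamic_uncertainty_binary_converge}), your version suffices for those applications, but it is not a proof of the lemma in the generality in which it is stated.

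Your decomposition also differs from the paper's. You split on $r^i_j=0$ versus $r^i_j>0$ and, in the latter case, build a pivotal $v^i$ directly from $\{a^i_j,\hat{a}^i_j\}\subseteq H^i_j(a)$ and the product structure of the uncertainty ball. The paper instead splits on whether $\hat{a}^i_j\in H^i_0(a)$ and, when it is not, invokes Lemma~2 of \citet{Meir15:Plurality} to produce a score vector in which $a^i_j$ and the violating $b$ are tied at the top \emph{strictly above} $\hat{a}^i_j$; this is exactly what lets the argument go through for multi-candidate issues, where $b$ need not equal $\hat{a}^i_j$ and moving the vote to $\hat{a}^i_j$ may hand the win to a third candidate. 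Your construction is more elementary in the binary case (and your observation that the outcomes on issues $k\neq i$ can be fixed independently via the product form of $\tilde S_{-j}$ is exactly right), but the paper's route is what buys the multi-candidate generality.

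One minor point: in your $r^i_j>0$ case, the condition $r^i_j>0$ is not actually what yields the pivotal $v^i$. The existence of a pivotal score vector follows already from $\{a^i_j,b^\ast\}\subseteq H^i_j(a)$ together with the box shape of the candidate-wise ball (the range of $v^i_1-v^i_0$ over the ball is an interval that must intersect $\{0,1\}$ once both candidates are potential winners); this holds even at $r^i_j=0$. Your case split is still valid, since you handle $r^i_j=0$ separately, but the justification you give for pivotality is slightly misattributed.
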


The proof of this lemma directly follows that of Lemma 3 in \citet{Meir15:Plurality}; see Appendix \ref{apx:convergence_proofs}.


\begin{thm}
Suppose all agents have $\mo$-legal preferences for the common order $\mo$ over binary issues.
Then LDI dynamics converge.
\label{thm:o_legal_converge}
\end{thm}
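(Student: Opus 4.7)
The plan is to argue by contradiction, assuming an infinite LDI improvement sequence exists and reducing to single-issue convergence one issue at a time in the order $\mathcal{O}$. Suppose for contradiction that from some initial profile there is an infinite LDI sequence. Since $p$ is finite, at least one issue undergoes infinitely many LDI steps; let $o_{i^*}$ be the earliest such issue in $\mathcal{O}$. By minimality, each of $o_1,\ldots,o_{i^*-1}$ has only finitely many LDI steps across the entire sequence, so there exists a finite time $t^*$ after which no LDI step on any of $o_1,\ldots,o_{i^*-1}$ occurs and their score vectors are frozen.

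To justify the choice of $t^*$, I first verify the single-issue base case for $o_1$. By $\mathcal{O}$-legality each agent $j$'s ranking over $D_{o_1}$ is a fixed linear order, and the score vector $s^{o_1}$ changes only during $o_1$-steps. For any candidate LDI step on $o_1$, the outcomes $f(v+\hat{a}_j)$ and $f(v+a_j)$ with $v\in \tilde{S}_{-j}(a;r_j)$ agree on every issue except $o_1$, so the $S$-dominance test collapses to the single-issue $\tilde{S}_{-j}^{o_1}$-dominance test against the fixed $o_1$-ranking. Hence the subsequence of $o_1$-steps is a valid single-issue iterative LDI plurality trajectory, which is finite by \citet{Meir15:Plurality}'s convergence result. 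Applying the same argument inductively to $o_2,\ldots,o_{i^*-1}$ (which, once the earlier issues freeze, face no interference from later ones either, by the same $\mathcal{O}$-legality logic) guarantees that $t^*$ is finite.

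For any $t\ge t^*$ the possible-winner sets $W_j^{o_k}(a(t))$ for $k<i^*$ are constant. Consider any LDI step on $o_{i^*}$ after $t^*$ by agent $j$, changing $a_j$ to $\hat{a}_j$ on coordinate $o_{i^*}$ only. The outcomes $f(v+\hat{a}_j)$ and $f(v+a_j)$ still differ only on the $o_{i^*}$ coordinate, and by $\mathcal{O}$-legality $j$'s ranking over $D_{o_{i^*}}$ is pinned down by the outcome tuple $(w_1,\ldots,w_{i^*-1})$ induced by $v$ on the earlier issues, independent of $v$'s coordinates on $o_{i^*+1},\ldots,o_p$. Since this tuple ranges over the fixed product $W_j^{o_1}\times\cdots\times W_j^{o_{i^*-1}}$, $\hat{a}_j$ can $S$-dominate $a_j$ only if $\hat{a}_j^{o_{i^*}}$ is the top candidate on $o_{i^*}$ consistently across every such tuple; otherwise some tuple makes the swap strictly worse and violates dominance. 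Consequently each agent either has a well-defined fixed effective preference on $o_{i^*}$ (when their conditional rankings agree) or no LDI step on $o_{i^*}$ at all (when they disagree). Since scores on $o_{i^*}$ are untouched by steps on $o_{i^*+1},\ldots,o_p$, the subsequence of $o_{i^*}$-steps after $t^*$ forms a valid single-issue iterative LDI plurality trajectory with these fixed effective preferences. By \citet{Meir15:Plurality} this subsequence is finite, contradicting infinitely many $o_{i^*}$-steps.

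The main obstacle I foresee is rigorously justifying the reduction to a single-issue game on $o_{i^*}$: specifically, verifying that agents with inconsistent conditional rankings are genuinely dormant on $o_{i^*}$ (so they do not perturb the reduced dynamics), and that the induced uncertainty on $o_{i^*}$ still satisfies the distance-based hypotheses required by \citet{Meir15:Plurality}. Invoking the characterization in \Cref{lem:lem3_multi} to enumerate the admissible forms of any LDI step on $o_{i^*}$ should streamline this verification, since each of the three cases there makes the role of the (now fixed) possible-winner sets $W_j^{o_k}$ for $k<i^*$ explicit.
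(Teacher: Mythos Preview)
Your approach is essentially the same as the paper's: both proofs pick the $\mathcal{O}$-earliest issue on which the dynamics fail to settle, observe that all $\mathcal{O}$-prior issues are frozen, and then use $\mathcal{O}$-legality together with \Cref{lem:lem3_multi} to conclude that any agent making an LDI step on that issue must have a fixed effective preference there, which yields the contradiction.

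Two minor points of comparison. First, the paper works directly inside a cycle $\calC$ rather than with an infinite tail: it lets $i$ be the $\mathcal{O}$-earliest issue with \emph{any} change in $\calC$, finds an agent $j$ who flips on $i$ at $t^*$ and flips back at $t^{**}$, and then applies \Cref{lem:lem3_multi} at both times to force some $W_j^k$ with $k$ prior to $i$ to change between $t^*$ and $t^{**}$---impossible since no votes on such $k$ move inside the cycle. This avoids your reduction to single-issue LDI convergence from \citet{Meir15:Plurality}; for binary issues that reduction is correct but overkill, since once an agent has a fixed binary preference they can switch at most once. Second, your ``base case'' paragraph verifying that $o_1,\ldots,o_{i^*-1}$ each see finitely many steps is redundant: this is immediate from your definition of $o_{i^*}$ as the $\mathcal{O}$-earliest issue with infinitely many steps, so the existence of a finite $t^*$ needs no separate inductive argument.
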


\begin{proof}
Fix an initial vote profile $a(0)$.
Suppose for contradiction that there is a cycle among the vote profiles $\calC = \{a(t_1), \ldots, a(t_T)\}$, where $a(t_T +1) = a(t_1)$ and $a(t_1)$ is reachable from $a(0)$ via LDI dynamics.
Let $i$ be the highest order issue in $\mo$ for which any agent changes their vote in $\calC$.

Let $t^* \in [t_1, t_T)$ be the first round that some agent $j$ takes an LDI step on issue $i$, where $a_j \xrightarrow{j} \hat{a}_j$ from vote profile $a(t^*)$; let $t^{**} \in (t^*, t_T]$ be the last round that $j$ switches their vote on $i$ back to $a^i_j$. It must be the case that $a^i_j \in H^i_j(a(t^*))$, since issues are binary and otherwise, $|H^i_j(a(t^*))|=1$ and $j$ would not have an improvement step. Hence by Lemma \ref{lem:lem3_multi}, $\hat{a}^i_j \succ_j a^i_j$ for every combination of possible winners in $\calW^{-i}(a(t^*); r_j)$. Likewise, on round $t^{**}$, $a^i_j \succ_j \hat{a}^i_j$ for every combination of possible winners in $\calW^{-i}(a(t^{**}); r_j)$. Thus for some issue $k$ and outcomes $x,y \in \{0,1\}$, $x \neq y$, we have $W^k_j(a(t^*)) = \{x\}$ and $W^k_j(a(t^{**})) = \{y\}$.

Since $j$ has $\mo$-legal preferences, $k$ must be prior to issue $i$ in the order $\mo$. However, no agent changed their vote on issue $k$ between rounds $t^*$ and $t^{**}$ so it must be that $x \in W^k_j(a(t^{**}))$, even if $j$'s uncertainty parameters changed. This forms a contradiction, so no such cycle can exist.

\end{proof}

The intuition behind Theorem \ref{thm:o_legal_converge} is that as an LDI sequence develops, there is some ``foremost'' issue $i$ in which no LDI step takes place on any issue prior to $i$ in the order $\mo$. Agents' relative preferences for the candidates in $i$ are fixed because their preferences are $\mo$-legal: score vectors for issues prior to $i$ in $\mo$ do not change scores of issues afterward do not affect agents' preferences for $i$. Hence, agents' improvement steps over the issue $i$ converge, whereas any cycle must have a sub-sequence of vote profile whose votes for issue $i$ cycles.

Note that $\mo$-legality is not necessary for convergence, as BR dynamics induced from the truthful vote profile in Example \ref{ex:prelims} converge. Although $\mo$-legality is a strict assumption, loosening this even slightly may lead to cycles. Example \ref{ex:BR_cycle} demonstrates a cycle in which each agent has an $\mo$-legal ranking but orders differ between agents.

Furthermore, the theorem describes that LDI steps over the issue $i$ eventually terminate, thus enabling each subsequent issue in $\mo$ to converge. 
This seems to suggest that IV under $\mo$-legal preferences is the same as \emph{truthful sequential voting}, where agents vote for their preferred alternative on each issue $o_i$ given the known previous outcomes of $\{o_1, \ldots, o_{i-1}\}$ \cite{Lang09:Sequential}.
Although the procedures' outcomes could be the same, there are two notable differences. First, the initial vote profile could have an issue with a single possible winning candidate that differs from the truthful sequential outcome; then no agent has an improvement step on that issue. Second, based on the scheduler, there may be a round with multiple possible winning candidates but no subsequent improvement steps on an issue. This may terminate IV with outcomes different than the truthful sequential vote.

This convergence result does not extend to the multi-candidate case, as declared in the following proposition and proved with the subsequent example.

\begin{prop}
LDI dynamics may not converge for multiple issues, even if agents have the same constant uncertainty parameters and $\mo$-legal preferences for the common order $\mo$.
\end{prop}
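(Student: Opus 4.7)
The plan is to prove the proposition by constructing an explicit LDI cycle in a multi-candidate setting, exploiting the precise point where the argument of Theorem~\ref{thm:o_legal_converge} fails. That proof uses binariness only when deducing, on the foremost issue $i$ in $\mo$, that $a^i_j \in H^i_j$ forces the rival $\hat a^i_j$ to be uniquely determined, so Lemma~\ref{lem:lem3_multi} yields $\hat a^i_j \succ_j a^i_j$ uniformly over $\calW^{-i}$. With three or more candidates on $i$, the set $H^i_j$ can contain multiple rivals, and an agent can prefer to switch to one of them at round $t^*$ and back from a different one at round $t^{**}$ without any issue prior to $i$ in $\mo$ ever changing outcome. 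This is the loophole the counterexample must realize, and it is already hinted at by the non-monotonicity recorded in Example~\ref{ex:LDI_steps_different_uncertainties}.

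Concretely, I would take $p=2$ with $\mo=(1,2)$, issue~$1$ binary and issue~$2$ with three candidates $D_2=\{A,B,C\}$, endowing all agents with $\mo$-legal rankings that agree on one value for issue~$1$ so that issue~$1$ locks in at the truthful profile and never admits an LDI step. All agents share one constant uncertainty parameter $r$. The problem then reduces to engineering a cyclic sequence of LDI steps on issue~$2$ under preferences on $\{A,B,C\}$ that are fixed once issue~$1$'s winner is. I would pick roughly three to five agents with chosen rankings on $D_2$, an initial score vector on issue~$2$, and $r$ so that a scheduler can drive a rotation of pivotal winners among $A,B,C$. For each step $a(t)\to a(t+1)$ I would verify Definition~\ref{dfn:local_dom} by exhibiting a witness $v \in \tilde S_{-j}(a(t);r)$ whose issue-$2$ outcome under $\hat a_j$ is strictly preferred by $j$ over its outcome under $a_j$, and checking that no other single-issue-$2$ deviation $S$-dominates $\hat a_j$; I would also confirm that issue~$1$ never admits an LDI step anywhere in the cycle.

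The main obstacle is the simultaneous calibration on issue~$2$: the rankings, the initial scores, and the uncertainty radius $r$ must be chosen so that after each scheduled step the pivotal set $H^2_j$ for the next moving agent shifts in exactly the way needed to enable the following step, while $r$ stays small enough that every transition is genuinely $S$-dominating rather than merely $S$-beating. Balancing this trade-off is what distinguishes the construction from the binary cycles of Example~\ref{ex:BR_cycle} or Example~\ref{ex:LDI_cycle} and is the technically delicate part of the proof.
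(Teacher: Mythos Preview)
Your plan has a genuine gap: by making all agents agree on issue~$1$ so that its winner is unique and certain, you collapse the construction to a \emph{single-issue} plurality LDI game on $D_2=\{A,B,C\}$ with \emph{strict} linear preferences. That game is known to converge (this is precisely the single-issue result of \citet{Meir15:Plurality} that the paper builds on), so no choice of rankings, initial scores, and radius $r$ will produce a cycle there. The ``rotation of pivotal winners among $A,B,C$'' you are looking for does not exist when every agent has a total order on $D_2$.

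The paper's construction works for the opposite reason. It keeps issue~$1$'s score vector close enough to a tie (scores $(7,8)$ with $r^1=2$) that \emph{both} outcomes of issue~$1$ remain possible winners throughout, even though no agent ever moves on issue~$1$. Because the moving agent~$j$'s preference on issue~$2$ is $b\succ c$ when $f^1=0$ and $c\succ b$ when $f^1=1$, this uncertainty on issue~$1$ makes $j$'s effective preference on issue~$2$ only a partial order: neither $b$ nor $c$ $S$-dominates the other, so $j$ cannot move to either. What $j$ \emph{can} do is oscillate between the two non-winners $a$ and $d$ as the potential-winner set on issue~$2$ shifts (driven by agent~$k$'s moves), exactly the ``step to a candidate outside $H^i_j$'' phenomenon the paper highlights and that yields Corollary~\ref{cor:cycle_partial_order}. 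In short, the failure of Theorem~\ref{thm:o_legal_converge} for multi-candidate issues is not that $H^i_j$ has several rivals per se, but that uncertainty on an \emph{earlier} issue in $\mo$ induces partial-order preferences on the later one; your proposal suppresses precisely this mechanism.
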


\begin{ex}
Let there be $p=2$ issues and $n=15$ agents who each use the $\ell_\infty$ uncertainty metric with common fixed uncertainty parameters $(r^1_j, r^2_j) = (2,1)~\forall j \leq n$. Label the candidates $\{0,1\}$ and $\{a,b,c,d\}$ respectively. Agent $j$ has preferences: if $f^1 = 0$ then $b \succ_j c \succ_j a \succ_j d$ on the second issue; otherwise $c \succ_j b \succ_j a \succ_j d$. Agent $k$ always prefers $a \succ_k d \succ_k b \succ_k c$ on the second issue. These preferences are $\mo$-legal for $\mo=\{1,2\}$.

Define $a(0)$ so $s(a(0)) = \{(7,8), (3,5,5,2)\}$ and $a_j(0) = a_k(0) = (0,a)$. There are four LDI steps involved in this cycle: (i) $(0,a) \xrightarrow{j} (0,d)$, (ii) $(0,a) \xrightarrow{k} (0,d)$, (iii) $(0,d) \xrightarrow{j} (0,a)$, and (iv) $(0,d) \xrightarrow{k} (0,a)$.
We prove these steps are valid in Appendix \ref{apx:ex}. 

\label{ex:o_legal_multicand_cycle}
\end{ex}

Note that $H^2_j(a(0)) = \{a,b,c\}$ and $H^2_j(a(2)) = \{b,c,d\}$. In contrast to the single-issue setting (see Lemma 4 of \citet{Meir15:Plurality}), agent $j$ takes LDI steps to candidates not in the potential winning set. This results from $j$'s uncertainty over whether $b$ or $c$ is most-preferred, even as both are preferable to $a$ and $d$. Hence, we get the following corollary:

\begin{coro}
LDI dynamics may not converge for plurality over a single issue for agents with partial order preferences.

\label{cor:cycle_partial_order}
\end{coro}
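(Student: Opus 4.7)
The plan is to derive the corollary directly from Example~\ref{ex:o_legal_multicand_cycle} by projecting its cycle onto the second issue. Throughout that cycle, no agent ever changes their vote on issue~$1$, so the entire LDI dynamic is confined to issue~$2$ with candidate set $\{a,b,c,d\}$. The only role played by the first issue was that agent~$j$'s ranking over $D_2$ depended on $f^1$, and since $j$'s uncertainty parameter $r_j^1 = 2$ keeps both values of $f^1$ possible at every profile in the cycle, from $j$'s single-issue viewpoint this collapses into the partial order
\begin{equation*}
b \succ_j a,\quad b \succ_j d,\quad c \succ_j a,\quad c \succ_j d,\quad a \succ_j d,
\end{equation*}
with $b$ and $c$ incomparable. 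This is precisely the intersection of $j$'s two total orders $b \succ c \succ a \succ d$ and $c \succ b \succ a \succ d$.

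I would then construct a single-issue plurality instance with the same $n=15$ agents, initial score vector $(3,5,5,2)$, and $\ell_\infty$ uncertainty parameter $r_j^2 = 1$, but replace agent~$j$'s total-order preference by the partial order above while agent~$k$ retains the total order $a \succ_k d \succ_k b \succ_k c$. Under the natural lift of Definition~\ref{dfn:s_dom} to partial orders---where $f(v+\hat a_j) \succ_j f(v+a_j)$ is interpreted as strict preference in $j$'s partial order---each of the four LDI steps (i)--(iv) from Example~\ref{ex:o_legal_multicand_cycle} survives verbatim. Hence the same four-state cycle arises in the single-issue instance, establishing the corollary.

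The main, and essentially only, obstacle is justifying the extension of $S$-dominance from total to partial orders. Adopting the standard interpretation in which a partial order represents the conjunction of its linear extensions, agent~$j$'s uncertainty over $f^1$ in Example~\ref{ex:o_legal_multicand_cycle} corresponds exactly to the two linear extensions of $j$'s partial order. Consequently the verification of the four LDI steps in the appendix transfers to the single-issue instance without any additional calculation, and the reduction yields a non-terminating LDI sequence.
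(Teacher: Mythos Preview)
Your approach is the same as the paper's: the corollary is derived by projecting Example~\ref{ex:o_legal_multicand_cycle} onto issue~$2$ and observing that agent~$j$'s uncertainty over $f^1$ amounts to a partial order on $\{a,b,c,d\}$ with $b$ and $c$ incomparable. The paper treats this as a one-line remark; you spell out the construction more carefully.

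There is, however, a gap in your justification that the appendix verification ``transfers without any additional calculation.'' The two semantics do not coincide. In the multi-issue instance, $\hat a_j$ $\tilde S_{-j}$-beats $a_j$ means $\exists v^2\,\exists L$ (a linear extension, selected via $v^1$) with $f^2(v^2+\hat a_j)\succ_L f^2(v^2+a_j)$; under your ``strict preference in the partial order'' lift it means $\exists v^2\,\forall L$. Likewise the ``does not $S$-beat'' clauses differ. So neither notion of $S$-dominance implies the other, and in particular part~(iii) of each step in the appendix (showing the new vote is not itself dominated) relies on choosing $v^1$ to pick out a specific linear order---an option unavailable in the single-issue partial-order setting.

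The cycle nonetheless survives under the strict partial-order reading, but this requires a separate (easy) check: at every profile in the cycle and for every $v^2\in\tilde S^2_{-j}$, both $\hat b$ and $\hat c$ (and also $\hat a$, $\hat d$) yield an issue-$2$ winner in $\{b,c\}$, so any two of these votes produce outcomes that are either equal or incomparable; hence $\hat b$ and $\hat c$ never $S$-beat the vote $j$ moves to, and parts~(iii) go through. Parts~(i) and~(ii) already use comparisons like $b\succ_j a$ that hold in the partial order, so they carry over directly. You should make this re-verification explicit rather than asserting a verbatim transfer.
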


\subsection{Alternating Uncertainty}

In Theorem \ref{thm:improvement_step_relations} we found that for binary issues, agents may have fewer LDI steps over an issue $i$ if that issue has less uncertainty and other issues have more. This suggests that LDI steps occur from a relative lack of information about the current issue's score vector than for other issues. If agents can gather more information about the current issue before changing their vote, thereby decreasing its uncertainty relative to other issues, then they may not have an LDI step. 

We therefore consider a 
specific form of dynamics over agents' uncertainty parameters where agents can gather this information and consider themselves pivotal only with respect to the lowered uncertainty. Agents are assumed to subsequently forget this relative information since it may be outdated by the time they change their vote again. 
We show in the following theorem that this eliminates cycles. 

\begin{dfn}(Alternating Uncertainty.)
Fix two parameters $r^c_j$, $r^o_j$ for each agent $j$ such that $r^c_j < r^o_j$. Define each agent $j$'s uncertainty parameters such that whenever they are scheduled
to change their vote on issue $i$, $j$'s uncertainty for $i$ is $r^c_j$ and for each other issue $k \neq i$ is $r^o_j$.
\label{dfn:alternating_uncertainty}
\end{dfn}

\begin{thm}
Given binary issues, LDI dynamics converges for agents with alternating uncertainty.
\label{thm:dynamic_uncertainty_binary_converge}
\end{thm}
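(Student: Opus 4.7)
The plan is to mimic the cycle-based contradiction used in Theorem \ref{thm:o_legal_converge}, but replace the ``foremost issue in $\mo$'' argument with one that exploits the strict gap $r^c_j < r^o_j$ under alternating uncertainty. Assume for contradiction that an infinite improvement sequence exists; since the state space is finite, it contains a cycle $\calC = \{a(t_1),\ldots,a(t_T)\}$ with $a(t_{T+1})=a(t_1)$. Because votes return to themselves along $\calC$, every agent $j$ who switches on an issue $i$ within $\calC$ does so an even number of times, alternating directions. Pick any such pair $(j,i)$ and let $t^* < t^{**}$ be two consecutive rounds at which $j$ changes vote on $i$, switching $a^i_j \to \hat{a}^i_j$ at $t^*$ and $\hat{a}^i_j \to a^i_j$ at $t^{**}$. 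By the alternating-uncertainty rule (Definition \ref{dfn:alternating_uncertainty}), at both rounds $j$ uses the same parameter vector $r_j$ with $r^i_j = r^c_j$ and $r^k_j = r^o_j$ for every $k\neq i$.

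Next, I apply Lemma \ref{lem:lem3_multi} to each of these two switches. Ruling out the boundary cases (1) and (3) in the binary setting (which I expect to handle by noting that $|H^i_j(a)|\in\{1,2\}$ and that case (1) forces $j$'s current vote never to be restored, while case (3) collapses to a classical best-response argument on issue $i$), the remaining case (2) gives: for every combination of possible winners $w\in\calW^{-i}(a(t^*);r_j)$, $\hat{a}^i_j \succ_j a^i_j$ conditional on $w$; and symmetrically, $a^i_j \succ_j \hat{a}^i_j$ for every $w\in\calW^{-i}(a(t^{**});r_j)$. Since $j$'s ranking $R_j$ is fixed, the only way to avoid a direct preference contradiction is for these two product sets to be disjoint. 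Hence there exists some issue $k\neq i$ with $W^k_j(a(t^*);r^o_j)=\{c\}$ and $W^k_j(a(t^{**});r^o_j)=\{c'\}$ for opposite candidates $c\neq c'$. Translating the flip through the candidate-wise uncertainty metric, the margin $s^k_{-j}(c;a)-s^k_{-j}(c';a)$ must traverse an interval of length at least $4r^o_j + O(1)$ between $t^*$ and $t^{**}$; since each LDI step on $k$ changes this margin by $\pm 2$, the cycle segment contains at least $2r^o_j$ LDI steps on issue $k$ in a specific direction.

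The hard part, which I expect to be the main obstacle, is converting this necessary condition into a global contradiction. The strict inequality $r^c_j < r^o_j$ is essential here: if uncertainty were merely $r^o_j$ uniformly, this chain of flip-requirements could in principle be satisfied. My plan is to iterate the argument along the cycle, applying it to each agent-issue pair $(j',k)$ that participates in the required flips on issue $k$, and observing that each such participation itself mandates large-margin flips on yet another issue $k'\neq k$ with bound proportional to $r^o_{j'}$. Because $\calC$ is finite and each agent's number of round-trips on each issue is finite, the accumulated flip-counts can be bounded by a lexicographic potential $\Phi(a) = ( | s^{k_1}(0;a) - s^{k_1}(1;a) |, \ldots, | s^{k_p}(0;a) - s^{k_p}(1;a) | )$ ordered by a ranking of issues tied to the magnitudes of the $r^o$-thresholds. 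I expect to show that along any improvement segment this potential strictly decreases in lexicographic order, or alternatively to close the recursion by a direct counting argument showing the segment between consecutive switches on the ``tightest'' issue cannot contain enough LDI steps to realize the required flip, yielding the contradiction and thereby the restricted-finite improvement property.
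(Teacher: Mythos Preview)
Your setup through the identification of issue $k$ with $W^k_j(a(t^*);r^o_j)=\{x\}$ and $W^k_j(a(t^{**});r^o_j)=\{y\}$, $x\neq y$, is correct and matches the paper's argument exactly (including the use of Lemma~\ref{lem:lem3_multi} and the binary observation that $a^i_j\in H^i_j$ forces case~(2)/(3), which in the binary setting both reduce to $\hat a^i_j\succ_j a^i_j$ for all $w\in\calW^{-i}$). The gap is in what you call ``the hard part.'' The margin-counting and lexicographic-potential ideas you sketch are vague, and as stated there is no reason the recursion terminates: the chain of required flips could circulate among issues indefinitely, and your potential $\Phi$ is not shown to decrease along improvement steps.

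The paper closes the argument in one line by a maximality trick you are missing. Instead of picking an arbitrary $(j,i)$, pick the agent $j$ with the \emph{largest} $r^o_j$. Now look at the \emph{first} round $t'\in(t^*,t^{**})$ at which any agent $h$ moves on issue $k$. At that moment no one has yet touched $k$, so $s^k(a(t'))=s^k(a(t^*))$ and hence $W^k_j(a(t');r^o_j)=\{x\}$. But $h$ has an LDI step on $k$, so $H^k_h(a(t');r^c_h)=\{0,1\}$. Since the distance is candidate-wise, seeing both candidates as potential winners at the same score vector where radius $r^o_j$ sees only one forces $r^c_h\ge r^o_j$. Alternating uncertainty then gives $r^o_h>r^c_h\ge r^o_j$, contradicting the maximality of $r^o_j$. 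This replaces your entire recursive counting scheme with a single comparison; the strict inequality $r^c<r^o$ is used exactly once, at that last step.
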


\begin{proof}

Fix an initial vote profile $a(0)$ and uncertainty parameters $r^c_j, r^o_j$ for each agent $j \leq n$. Suppose for contradiction that there is a cycle among the vote profiles $\calC = \{a(t_1), \ldots, a(t_T)\}$, where $a(t_T+1)=a(t_1)$ and $a(t_1)$ is reachable from $a(0)$ via LDI dynamics. Without loss of generality, suppose all issues and agents are involved in the cycle. 

Consider the agent $j$ with the largest $r^o_j = \max_{u \leq n} r^o_u$. Let $t^* \in [t_1, t_T)$ be the first round that $j$ takes an LDI step on issue $i$, where $a_j \xrightarrow{j} \hat{a}_j$ from vote profile $a(t^*)$; let $t^{**} \in (t^*, t_T]$ be the last round that $j$ switches their vote on $i$ back to $a^i_j$. It must be the case that $a^i_j \in H^i_j(a(t^*))$, since issues are binary and otherwise, $|H^i_j(a(t^*))|=1$ and $j$ would not have an improvement step. Hence by Lemma \ref{lem:lem3_multi}, $\hat{a}^i_j \succ_j a^i_j$ for every combination of possible winners in $\calW^{-i}(a(t^*); r_j)$. Likewise, on round $t^{**}$, $a^i_j \succ_j \hat{a}^i_j$ for every combination of possible winners in $\calW^{-i}(a(t^{**}); r_j)$. Thus for some issue $k$ and outcomes $x,y \in \{0,1\}$, $x \neq y$, we have $W^k_j(a(t^*)) = \{x\}$ and $W^k_j(a(t^{**})) = \{y\}$.

Let $t' \in (t^*, t^{**})$ be the first round since $t^*$ that some agent $h$ changes their vote on issue $k$. Then $H^k_h(a(t')) = \{0,1\}$. Since $W^k_j(a(t')) = W^k_j(a(t^*)) = \{x\} \subsetneq \{0,1\}$ and distance functions are candidate-wise, $r^c_h \geq r^o_j$. This entails $r^o_h > r^o_j$ by definition of alternating uncertainty, which contradicts the assertion that $j$ is the agent $u$ with the largest $r^o_u$.

\end{proof}

This convergence result does not extend to the multi-candidate case, as 
Example
\ref{ex:o_legal_multicand_cycle} also covers this setting.





\section{Experiments}
\label{sec:exps}





\begin{figure*}[t] 
  \begin{minipage}[b]{0.49\linewidth}
    \centering
    \includegraphics[width=\linewidth]{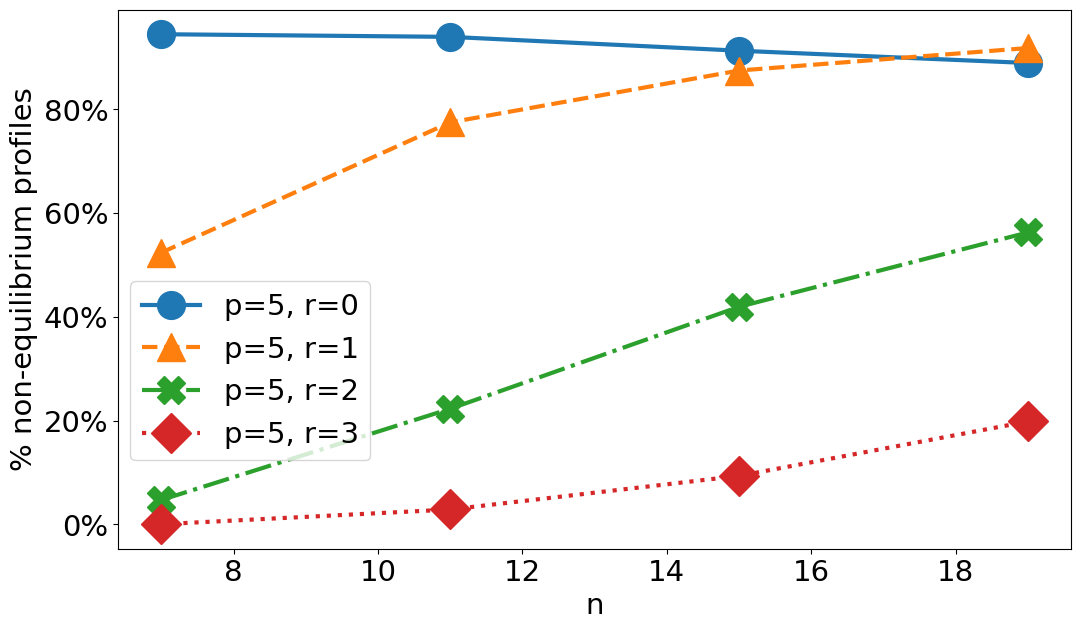}
    \caption{Percentage of non-equilibrium profiles as $n$\\increases.} 
    \label{fig:perc_dyns}
    \vspace{2ex}
  \end{minipage}
  \hspace{1ex}
  \begin{minipage}[b]{0.49\linewidth}
    \centering
    \includegraphics[width=\linewidth]{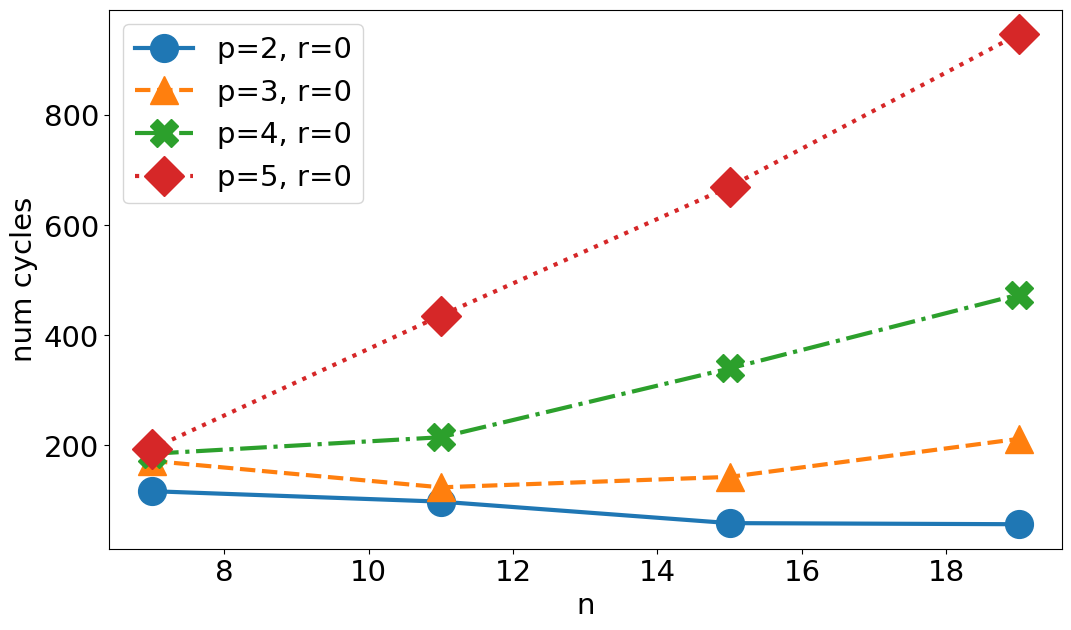}
    \caption{Number of profiles whose sequences cycle as $n$ increases.} 
    \label{fig:num_cycles}
    \vspace{2ex}
  \end{minipage} 
  \begin{minipage}[b]{0.49\linewidth}
    \centering
    \includegraphics[width=\linewidth]{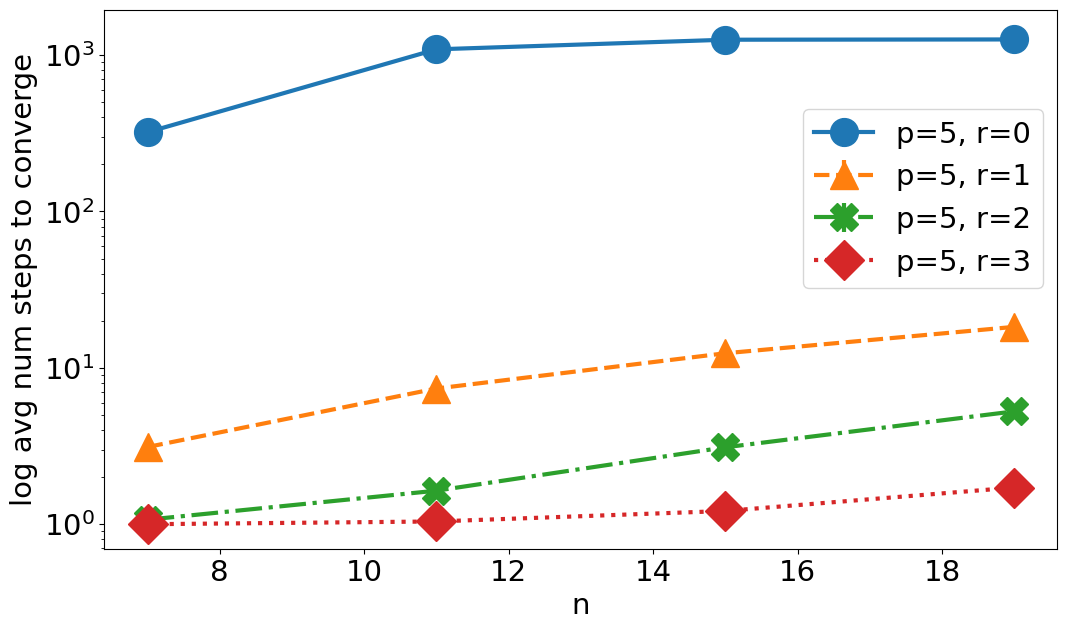}
    \caption{Log average number steps to converge as $n$\\increases; 95\% CI (too small to show).}
    \label{fig:steps_to_converge}
    \vspace{2ex}
  \end{minipage}
  \hspace{1ex}
  \begin{minipage}[b]{0.49\linewidth}
    \centering
    \includegraphics[width=\linewidth]{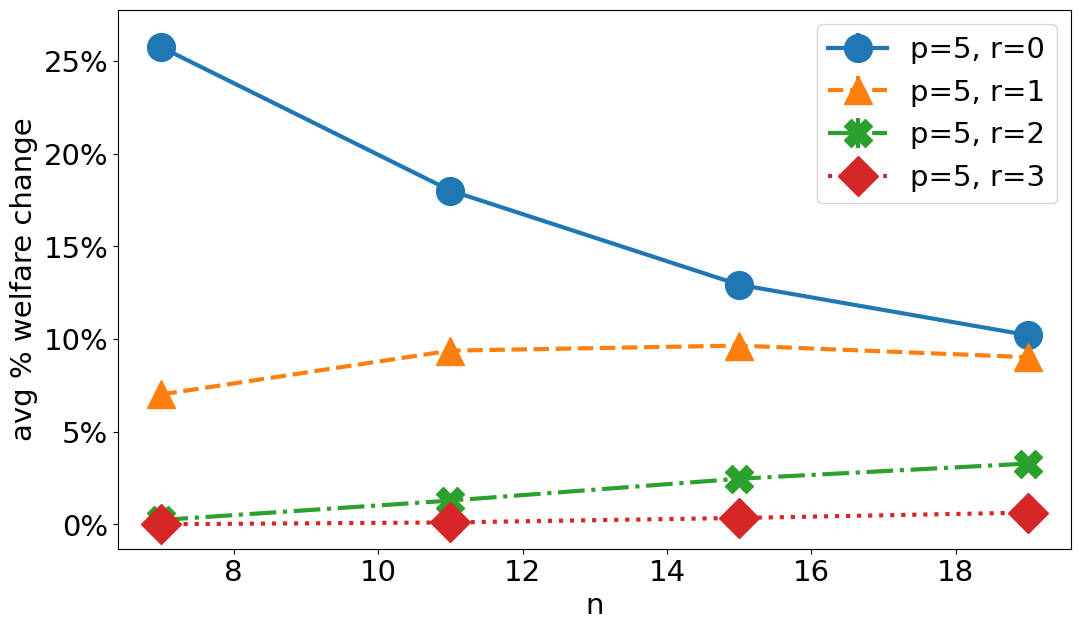}
    \caption{Average percent change in welfare as $n$\\increases; 95\% CI (too small to show).}
    \label{fig:welf_perc_change}
    \vspace{2ex}
  \end{minipage} 
\end{figure*}

Our computational experiments  investigate the effects of uncertainty, number of issues, and number of agents on LDI dynamics. Specifically, we ask how often truthful vote profiles are themselves in equilibrium, how often LDI dynamics does not converge, and the path length to equilibrium when it converges. 
Our inquiry focuses on whether cycles are commonplace in practice even as convergence is not guaranteed.



We answer these questions 
for a broad cross-section of inputs, with
$n \in \{7, 11, 15, 19\}$ agents, $p \in \{2,3,4,5\}$ binary issues, and $r \in \{0,1,2,3\}$ uncertainty that is constant for all agents, issues, and rounds. We generate $10,000$ preference profiles for each combination by sampling agents' preferences uniformly and independently at random. We simulate LDI dynamics from the truthful vote profile using a scheduler 
that selects profiles
uniformly at random from the set of valid LDI steps from all agents and issues. If there are no such steps, we say the sequence has converged. Otherwise, we take $50,000$ rounds as a sufficiently large stopping condition to declare the sequence has cycled.

Our results are presented in Figures \ref{fig:perc_dyns} -- \ref{fig:steps_to_converge} with respect to $n$ and $r$ for $p=5$. We find that as uncertainty is introduced and $r$ increases, the availability of LDI steps diminishes significantly (Figure \ref{fig:perc_dyns}) to the point of eliminating (almost) all cycles and shortening the path length to convergence (Figure \ref{fig:steps_to_converge}). 
Figure \ref{fig:num_cycles} presents the number of profiles whose LDI sequence cycles for no uncertainty, while only $5$ of our sampled $r \geq 1$ profiles' sequences cycle. Therefore, cycles with uncertainty are the exception rather than the norm.

These findings corroborate our theoretical analysis. As uncertainty increases, more issues are perceived by agents to have more than one possible winner. Since issues are interdependent for many preference rankings, fewer agents have LDI steps. On the other hand, as $n$ increases, more agents have rankings without these interdependencies, thus increasing the availability of LDI steps.




As an additional inquiry, we studied how IV affects the quality of outcomes by comparing the social welfare of equilibrium to truthful vote profiles.\footnote{Measured by the percent change in Borda welfare. The \emph{Borda utility} of outcome $a$ for ranking $R$ is $2^p$ minus the index of $a$'s position in $R$; the \emph{Borda welfare} is the sum of utilities across agents.} 
We find in Figure \ref{fig:welf_perc_change} that IV improves average welfare, but at a rate decreasing in $r$. This finding agrees with experiments by \citet{bowman14:potential}, \citet{grandi20:voting}, suggesting that IV may reduce multiple-election paradoxes by helping agents choose better outcomes. 
However, further work will be needed to generalize this conclusion, as it contrasts experiments of single-issue IV by \citet{meir2020strategic}, \citet{Koolyk17:Convergence}.

\section{Discussion and Open Questions}
\label{sec:conclusion}

We have introduced a novel model of strategic behavior in IV for multiple issues under uncertainty. 
We find that for binary issues, the existence of cycles hinges on the interdependence of issues in agents' preference rankings. 
Specifically, once an agent $j$ takes an LDI step on an issue $i$, they only subsequently revert their vote when their preference for $i$ changes. This occurs when the set of possible winning candidates of other issues, that affect $j$'s preference for $i$, have changed.

Without this interdependence, agents' preference over issues change a finite number of times, so LDI dynamics converge (Theorem \ref{thm:o_legal_converge}). As uncertainty increases over issues other than the one agents are changing, fewer preference rankings admit LDI steps which eliminates cycles (Theorem \ref{thm:dynamic_uncertainty_binary_converge}).
Convergence does not extend to multi-candidate issues, as LDI plurality dynamics may cycle if agents only have partial order preference information (Corollary \ref{cor:cycle_partial_order}). Our experiments confirm that convergence is practically guaranteed with uncertainty, despite its possibility, and suggests IV improves agents' social welfare over their truthful vote outcome.

There are several open directions for future work. First, our empirical study was limited by assuming agents' preferences were sampled from the impartial culture preference distribution and had additive social welfare (see e.g., \citet{Tsetlin2003:The-impartial}, \citet{Sen1999:The-Possibility}). Proving IV's welfare properties for more realistic preference distributions and welfare functions may follow the dynamic price of anarchy work of \citet{Branzei13:How}, \citet{Kavner21:strategic} and smoothed analysis techniques of \citet{Xia2020:The-Smoothed}.

Second, IV is useful, in part, for protecting agents' privacy, as they do not explicitly reveal their truthful preferences. However, agents that make improvement steps implicitly reveal partial preference information. Studying IV when agents account for others' preferences  based on current information is an interesting open direction.
A third direction is to detail the properties, such as anonymity and Condorcet-consistency, of multi-issue IV rules that may be inherited by the decision rule used locally on each issue, as studied for sequential voting rules by \citet{Xia11:Strategic}.


\bibliographystyle{plainnat}
\bibliography{for_arxiv_references}

\clearpage

\appendix

\section*{Appendix}

\section{Deferred Proofs}
\label{apx:convergence_proofs}

\setcounter{thm}{0}
\begin{thm}
Given binary issues, consider agent $j$ changing their vote on issue $i$ in vote profile $a$ with one of three uncertainty parameters: $r_j$, $q_j$, or $\hat{q}_j$. The parameters $r_j$ and $q_j$ only differ on issue $k \neq i$ such that $r_j^k < q_j^k$; $r_j$ and $\hat{q}_j$ only differ on issue $i$ such that $r_j^i < \hat{q}_j^i$. Let $LD^i_j$, $LQ^i_j$, and $\hat{LQ}^i_j$ denote $j$'s possible LDI steps according to each of the parameters. Then $LQ_j^i \subseteq LD_j^i \subseteq \hat{LQ}^i_j$.
\end{thm}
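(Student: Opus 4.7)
My plan is to reduce both inclusions to a clean characterization of $\tilde{S}_{-j}$-dominance specialized to binary issues, after which the containments fall out from set-inclusion of the uncertainty factors. First, I would observe that for binary issues, the clause in Definition \ref{dfn:local_dom} asking $\hat{a}_j$ not to be $S$-dominated by a vote differing from $a_j$ only on $i$ is automatic: the unique such other candidate vote is $a_j$ itself, and antisymmetry of $S$-dominance forbids $a_j$ from $S$-dominating $\hat{a}_j$ once $\hat{a}_j$ $S$-dominates $a_j$. So membership in $LD^i_j$ (and likewise for $LQ^i_j$, $\hat{LQ}^i_j$) reduces to ordinary $\tilde{S}_{-j}$-dominance of $a_j$ by $\hat{a}_j$.

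Next I would prove a decoupled characterization of this dominance. Because $\hat{a}_j$ differs from $a_j$ only on issue $i$, for any $v$ the outcomes $f(v+\hat{a}_j)$ and $f(v+a_j)$ agree on every issue $k\neq i$; pivotality on $i$ (i.e., $f^i(v+\hat{a}_j)\neq f^i(v+a_j)$) depends only on $v^i$; and the other-issue winner tuple $C^{-i} = f^{-i}(v+a_j)$ depends only on $v^{-i}$. The Cartesian factorization $\tilde{S}_{-j}(a;r_j)=\times_k \tilde{S}^k_{-j}(a;r_j^k)$ then lets $v^i$ and $v^{-i}$ be chosen independently; combined with the fact that rankings are strict so $(\hat{a}^i_j,C^{-i})$ and $(a^i_j,C^{-i})$ are strictly comparable, $\hat{a}_j$ $\tilde{S}_{-j}(a;r_j)$-dominates $a_j$ iff (A) some $v^i\in \tilde{S}^i_{-j}(a;r_j^i)$ is pivotal, and (B) for every $C^{-i}\in \calW^{-i}(a;r_j)$ one has $(\hat{a}^i_j,C^{-i})\succ_j (a^i_j,C^{-i})$. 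Non-vacuity is immediate because $s_{-j}\in \tilde{S}_{-j}$ forces $\calW^{-i}(a;r_j)$ to be non-empty.

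With this characterization, the inclusions are one-line monotonicity arguments. Since $\hat{\delta}$ is monotone, $\tilde{S}^k_{-j}(a;r_j^k)\subseteq \tilde{S}^k_{-j}(a;q_j^k)$ and $\tilde{S}^i_{-j}(a;r_j^i)\subseteq \tilde{S}^i_{-j}(a;\hat{q}_j^i)$. For $LQ^i_j\subseteq LD^i_j$: $r_j$ and $q_j$ agree on issue $i$, so clause (A) is identical, and $\calW^{-i}(a;r_j)\subseteq \calW^{-i}(a;q_j)$, so (B) for the larger set implies (B) for the smaller. For $LD^i_j\subseteq \hat{LQ}^i_j$: $r_j$ and $\hat{q}_j$ agree off of $i$, so $\calW^{-i}$ is unchanged and (B) transfers, while any pivotal $v^i\in \tilde{S}^i_{-j}(a;r_j^i)$ remains pivotal in the larger $\tilde{S}^i_{-j}(a;\hat{q}_j^i)$, so (A) transfers. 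The main obstacle is justifying the characterization step itself---using antisymmetry plus strictness of preferences to collapse clause (II) of $S$-dominance into the universal statement (B), and using the product structure of $\tilde{S}_{-j}$ to decouple pivotality on $i$ from the realized winners on other issues. Once that is in place, one never needs to directly verify the ``not $S$-dominated'' clause, which is precisely the simplification that fails in the multi-candidate case (cf.\ Example \ref{ex:LDI_steps_different_uncertainties}).
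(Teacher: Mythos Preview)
Your proposal is correct. Both you and the paper start with the same reduction: for binary issues the ``not itself $S$-dominated'' clause of Definition~\ref{dfn:local_dom} is vacuous, so $LD^i_j$, $LQ^i_j$, $\hat{LQ}^i_j$ coincide with the bare $S$-dominating sets $D^i_j$, $Q^i_j$, $\hat{Q}^i_j$. From there the routes diverge in presentation. The paper argues the two inclusions directly by score-tuple surgery: given a dominating witness $\tilde{s}$ for the source parameter, it swaps out the single coordinate on which the parameters differ to land inside the target uncertainty set, and then invokes the already-transferred ``not $S$-beaten'' clause to pin down the preference direction at the new tuple. You instead first isolate an explicit characterization---$\hat{a}_j$ dominates $a_j$ iff (A) some $v^i\in\tilde{S}^i_{-j}$ is pivotal and (B) $(\hat{a}^i_j,C^{-i})\succ_j(a^i_j,C^{-i})$ for every $C^{-i}\in\calW^{-i}$---after which both inclusions are immediate from monotonicity of $\tilde{S}^i_{-j}$ and of $\calW^{-i}$ in the respective radii. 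The underlying ingredients (product factorization of $\tilde{S}_{-j}$; the binary fact that any pivotal $v^i$ yields the fixed outcome pair $(\hat{a}^i_j,a^i_j)$ on issue $i$) are identical. Your packaging is somewhat more modular, essentially anticipates the content of Lemma~\ref{lem:lem3_multi}, and sidesteps the paper's Part~2 line ``$\nexists\,\tilde{s}'\in\tilde{S}_{-j}(a;\hat{q}_j)\ldots$ by point~(2) above,'' which as written appeals to set inclusion in the wrong direction and tacitly relies on exactly the binary observation your characterization makes explicit.
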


\begin{proof}
We prove the theorem by proving a weaker claim that holds for the more general multi-candidate issues setting. Specifically, let $D^i_j$, $Q^i_j$, and $\hat{Q}^i_j$ denote the set of votes that $S$-dominate $a_j$ and only differ on the $i^{th}$ issue, corresponding to $S = \tilde{S}_{-j}(a;r_j)$, $S = \tilde{S}_{-j}(a;q_j)$, or $S = \tilde{S}_{-j}(a;\hat{q}_j)$ respectively. We show that $Q^i_j \subseteq D^i_j \subseteq \hat{Q}^i_j$. The theorem follows because $LD^i_j = D^i_j$ (respectively, $LQ^i_j = Q^i_j$ and $\hat{LQ}^i_j = \hat{Q}^i_j$) for binary issues.

\underline{(Part 1: $Q^i_j \subseteq D^i_j$.)} 
Without loss of generality let $D^i_j \neq \emptyset$. Suppose $\hat{a}_j \in Q^i_j$.
Then $\hat{a}_j$ $\tilde{S}_{-j}(a;q_j)$-dominates $a_j$, entailing:
\begin{enumerate}
    \item $\exists \tilde{s} \in \tilde{S}_{-j}(a; q_j)$ such that $f(\tilde{s} + \hat{a}_j) \succ_j f(\tilde{s} + a_j)$, and
    \item $\nexists \tilde{s}' \in \tilde{S}_{-j}(a; q_j)$ such that $f(\tilde{s}' + \hat{a}_j) \prec_j f(\tilde{s}' + a_j)$.
\end{enumerate}
First, by construction of the uncertainty sets, $\tilde{S}^k_{-j}(a; r_j) \subseteq \tilde{S}^k_{-j}(a; q_j)$ and $\tilde{S}^h_{-j}(a; r_j) = \tilde{S}^h_{-j}(a; q_j)$ for all $h \neq k$; therefore $\tilde{S}_{-j}(a; r_j) \subseteq \tilde{S}_{-j}(a; q_j)$. It follows that $\nexists \tilde{s}' \in \tilde{S}_{-j}(a; r_j)$ such that $f(\tilde{s}' + \hat{a}_j) \prec_j f(\tilde{s}' + a_j)$ by point (2) above. Hence, $a_j$ does not $\tilde{S}_{-j}(a; r_j)$-beat $\hat{a}_j$.

Second, define a score tuple $\tilde{v}$ such that $\tilde{v}^h = \tilde{s}^h$ for each $h \neq k$ and $\tilde{v}^k \in \tilde{S}^k(a; r_j)$ arbitrarily. It is the case that $\tilde{v} \in \tilde{S}(a; r_j)$ since $\tilde{S}^h_{-j}(a; r_j) = \tilde{S}^h_{-j}(a; q_j)$ for all $h \neq k$. Since $f^i(\tilde{s} + \hat{a}_j) \neq f^i(\tilde{s} + a_j)$ and $f^h(\tilde{s} + \hat{a}_j) = f^h(\tilde{s} + a_j)$ for all $h \neq i$, we have $f(\tilde{v} + \hat{a}_j) \neq f(\tilde{v} + a_j)$. It follows from the above argument that $f(\tilde{v} + \hat{a}_j) \succ_j f(\tilde{v} + a_j)$. Therefore $a_j~\tilde{S}_{-j}(a; r_j)$-beats $\hat{a}_j$ and $\hat{a}_j \in D^i_j$.

\underline{(Part 2: $D^i_j \subseteq \hat{Q}^i_j$.)} 
Without loss of generality let $\hat{Q}^i_j \neq \emptyset$. Suppose $\hat{a}_j \in D^i_j$.
Then $\hat{a}_j$ $\tilde{S}_{-j}(a;r_j)$-dominates $a_j$, entailing:
\begin{enumerate}
    \item $\exists \tilde{s} \in \tilde{S}_{-j}(a; r_j)$ such that $f(\tilde{s} + \hat{a}_j) \succ_j f(\tilde{s} + a_j)$, and
    \item $\nexists \tilde{s}' \in \tilde{S}_{-j}(a; r_j)$ such that $f(\tilde{s}' + \hat{a}_j) \prec_j f(\tilde{s}' + a_j)$.
\end{enumerate}

By construction of the uncertainty sets, $\tilde{S}^i_{-j}(a; r_j) \subseteq \tilde{S}^i_{-j}(a; \hat{q}_j)$ and $\tilde{S}^h_{-j}(a; r_j) = \tilde{S}^h_{-j}(a; \hat{q}_j)$ for all $h \neq i$; therefore $\tilde{S}_{-j}(a; r_j) \subseteq \tilde{S}_{-j}(a; \hat{q}_j)$. 
It immediately follows that $\tilde{s} \in \tilde{S}_{-j}(a; \hat{q}_j)$, so $\hat{a}_j$ $\tilde{S}_{-j}(a; r_j)$-beats $a_j$.
Furthermore, $\nexists \tilde{s}' \in \tilde{S}_{-j}(a; \hat{q}_j)$ such that $f(\tilde{s}' + \hat{a}_j) \prec_j f(\tilde{s}' + a_j)$ by point (2) above. Hence, $a_j$ does not $\tilde{S}_{-j}(a; r_j)$-beat $\hat{a}_j$.
Therefore $\hat{a}_j \in \hat{Q}^i_j$, concluding our proof.

\end{proof}

\setcounter{lem}{0}

\begin{lem}
Consider an LDI step $a_j \xrightarrow{j} \hat{a}_j$ over issue $i$ from vote profile $a$ by agent $j$ with uncertainty parameter $r_j$. Then either (1) $a^i_j \notin H^i_j(a)$; or for every combination of possible winners in $\calW^{-i}(a; r_j)$, either (2) $a^i_j \prec_j b$ for all $b \in H^i_j(a)$; or (3) $r^i_j = 0$, $\{a^i_j, \hat{a}^i_j\} \subseteq H_0(a)$ and $\hat{a}^i_j \succ_j a^i_j$.
\end{lem}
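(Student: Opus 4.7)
The plan is to mimic the proof of Lemma 3 in \citet{Meir15:Plurality} for the single-issue setting, lifting it to the multi-issue case by conditioning on each combination of possible winners $\vec{w}^{-i} \in \calW^{-i}(a;r_j)$ on the other issues separately. The key enabling structural fact is that candidate-wise distance-based uncertainty makes $\tilde{S}_{-j}(a;r_j)$ a Cartesian product $\times_{k \in \calP} \tilde{S}^k_{-j}(a;r^k_j)$, so we can freely pair a chosen score vector $v^i \in \tilde{S}^i_{-j}$ with any $v^{-i} \in \times_{k\neq i} \tilde{S}^k_{-j}$.

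First I would assume case (1) fails, so $a^i_j \in H^i_j(a)$, and fix an arbitrary $\vec{w}^{-i} \in \calW^{-i}(a;r_j)$. Using the definition of $W^k_j(a)$ for each $k\neq i$, I would pick $v^k \in \tilde{S}^k_{-j}$ realizing $f^k(v+a_j)=\vec{w}^k$; assembling these into $v^{-i}$ and noting that $\hat{a}_j$ agrees with $a_j$ off issue $i$, every tuple of the form $(v^i_*, v^{-i})$ with $v^i_* \in \tilde{S}^i_{-j}$ lies in $\tilde{S}_{-j}$ and produces outcomes $\vec{w}^{-i}$ on the other issues under either $a_j$ or $\hat{a}_j$. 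Applying the no-worsening clause of $S$-dominance (clause (II) of Definition~\ref{dfn:s_dom}) to such tuples then reduces the problem to a single-issue statement conditioned on $\vec{w}^{-i}$: for every $v^i_* \in \tilde{S}^i_{-j}$, $f^i(v^i_*+\hat{a}^i_j)$ is weakly preferred to $f^i(v^i_*+a^i_j)$ given $\vec{w}^{-i}$.

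Next I would split on $r^i_j$. If $r^i_j>0$, then $\tilde{S}^i_{-j}$ contains more than one score vector; in particular, for any $b \in H^i_j(a)$ there exists $v^i_b \in \tilde{S}^i_{-j}$ and a vote with $i$-th coordinate $b$ that realizes $b$ as the issue-$i$ winner, and similarly there exists $v^i_{a} \in \tilde{S}^i_{-j}$ realizing $a^i_j$ as the winner under the vote $a_j$. Tracing the no-worsening condition through the pair of tuples $(v^i_a,v^{-i})$ and $(v^i_b, v^{-i})$ and using $S$-beating to produce a strictly improving witness, the argument of \citet{Meir15:Plurality}'s Lemma 3 then pushes through verbatim on the single coordinate $i$ and forces $a^i_j \prec_j b$ (conditioned on $\vec{w}^{-i}$) for every $b \in H^i_j(a)$ distinct from $a^i_j$, which is case (2). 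If instead $r^i_j = 0$, then $\tilde{S}^i_{-j}=\{s^i_{-j}\}$ is a singleton and $H^i_j(a) = H_0(a)$, so the $S$-dominance conditions collapse to ordinary best response on issue $i$ given $\vec{w}^{-i}$: both $a^i_j$ and $\hat{a}^i_j$ must lie in $H_0(a)$ and $\hat{a}^i_j \succ_j a^i_j$, which is case (3).

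The main obstacle will be the case $r^i_j>0$: for each $b \in H^i_j(a)$ I need to manufacture a score tuple in $\tilde{S}_{-j}$ on which switching $j$'s vote between $a^i_j$ and $\hat{a}^i_j$ flips the issue-$i$ winner between $a^i_j$ and $b$, \emph{without} disturbing the outcomes $\vec{w}^{-i}$. The product structure of $\tilde{S}_{-j}$ handles the ``without disturbing'' part cleanly, but realizing the required pivot on issue $i$ requires the same careful score-counting argument used in the single-issue proof of \citet{Meir15:Plurality}, combined with the observation that the assumption $a^i_j \in H^i_j(a)$ already gives the needed tie-breaking slack. Finally, I would note that in the degenerate subcase $H^i_j(a) = \{a^i_j\}$, no vote change on issue $i$ can alter its outcome, so no LDI step exists on issue $i$ at all, and the conclusion is vacuous.
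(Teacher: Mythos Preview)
Your proposal is correct and takes essentially the same approach as the paper: both exploit the Cartesian product structure of $\tilde{S}_{-j}(a;r_j)$ to fix an arbitrary combination $\vec{w}^{-i}\in\calW^{-i}(a;r_j)$ on the off-issues, reduce to a single-issue pivot analysis on coordinate $i$, and invoke the construction from Lemma~2/3 of \citet{Meir15:Plurality} to manufacture a score tuple on which $a_j$ would $\tilde S_{-j}$-beat $\hat a_j$, contradicting the LDI assumption. The only organizational difference is that the paper argues by contradiction and splits on whether $\hat a^i_j\in H_0(a)$, whereas you argue forward and split on whether $r^i_j=0$; your split aligns more directly with the case structure of the lemma's conclusion, but the underlying score-vector constructions are identical.
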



\begin{proof}
The proof follows that of Lemma 3 in \citet{Meir15:Plurality}. Suppose that $a^i_j, b \in H^i_j(a)$ and $a^i_j \succ_j b$ whenever some combination of possible winners $(c^1, c^2, \ldots, c^{i-1}, c^{i+1}, \ldots, c^p) \in \calW^{-i}(a; r_j)$ wins (i.e., (1) and (2) are violated). Assume first that $\hat{a}^i_j \notin H_0(a)$. By Lemma 2 of \citet{Meir15:Plurality}, $\exists \tilde{s} \in \tilde{S}_{-j}(a; r_j)$ such that:
\begin{itemize}
    \item $a^i_j, b$ have maximal score (possibly with other candidates), strictly above $\hat{a}^i_j$, in $\tilde{s}^i$;
    \item for each $k \neq i$, $\tilde{s}^k$ is such that $f^k(\tilde{s}+a_j) = c^k$ wins.
\end{itemize} 
W.l.o.g. assume $b$ is prior to $a^i_j$ in tie-breaking (otherwise adjust $\tilde{s}$ so that $f^i(\tilde{s})=b$).
Thus $f^i(\tilde{s}+a_j) = a^i_j$ while $f^i(\tilde{s}+\hat{a}_j) = b$. Since $a^i_j \succ_j b$ given $\tilde{s}$, this implies $a_j~\tilde{S}_{-j}(a; r_j)$-beats $\hat{a}_j$.

The remaining case is where $\hat{a}^i_j \in H_0(a)$ and $a^i_j \succ_j \hat{a}^i_j$ whenever some $(c^1, c^2, \ldots, c^{i-1}, c^{i+1}, \ldots, c^p) \in \calW^{-i}(a; r_j)$ wins. Then in $\tilde{s}$ where $a^i_j, \hat{a}^i_j$ are tied and $(c^k)_{k \in \calP \backslash \{i\}}$ wins, it is better for $j$ to vote for $a_i$.

In both cases we get $\hat{a}_j \notin LD^i_j$, which is a contradiction.
\end{proof}


\section{Supplementary Examples}
\label{apx:ex}

\setcounter{ex}{3}
\begin{ex}{(Cont'd.)}
Recall the agents $j$ and $k$ with the following preferences:
\begin{itemize}
    \item If $f^1 = 0$ then $j$ prefers $b \succ c \succ a \succ d$ on the second issue; 
    \item if $f^1 = 1$ then $j$ prefers $c \succ b \succ a \succ d$ on the second issue;
    \item agent $k$ prefers $a \succ d \succ b \succ c$ on the second issue regardless of the first issue's outcome.
\end{itemize}

Initialize the vote profile $a(0)$ such that $s(a(0)) = \{(7,8), (3,5,5,2)\}$ and $a_j(0) = a_k(0) = (0,a)$. There are four LDI steps involved in this cycle: (i) $(0,a) \xrightarrow{j} (0,d)$, (ii) $(0,a) \xrightarrow{k} (0,d)$, (iii) $(0,d) \xrightarrow{j} (0,a)$, and (iv) $(0,d) \xrightarrow{k} (0,a)$.
We will prove that these are valid LDI steps in turn, demonstrating that (i) the new vote $S$-beats the old vote, (ii) the old vote does not $S$-beat the new vote, and (iii) the new vote is not $S$-dominated. For any candidate $e \in \{a,b,c,d\}$, denote the vote switching the second candidate to $e$ by $\hat{e} = (0,e)$. Recall that ties are broken in lexicographical order.

\underline{(Step 1: $s^2(a(0)) = (3,5,5,2)$.)} Since $r^2_j=1$ and $a^2_j = a$, we have $H^2_j(a(0)) = \{a,b,c\}$.

(i) Let $\tilde{s} = \{(6,8),(3,4,4,2)\} \in \tilde{S}_{-j}(a(0); r_j)$. Then $f(\tilde{s} + \hat{d}) = (1,b) \succ_j (1,a) = f(\tilde{s} + \hat{a})$, so $\hat{d} ~\tilde{S}_{-j}$-beats $\hat{a}$.

(ii) For any $\tilde{s} \in \tilde{S}_{-j}(a(0); r_j)$, we have that if $f^2(\tilde{s}+\hat{d}) = a$ then $f^2(\tilde{s}+\hat{a}) = a$; otherwise, if $f^2(\tilde{s}+\hat{a}) \in \{b,c\}$ then $f^2(\tilde{s}+\hat{d}) = f^2(\tilde{s}+\hat{a})$. Hence, it is never preferable for $j$ to vote for $\hat{a}$ than $\hat{d}$.

(iii) Neither $\hat{b}$ nor $\hat{c}$ $\tilde{S}_{-j}$-dominate $\hat{d}$ since for $\tilde{s}' = \{(6,8), (2,4,5,2)\}$ and $\tilde{s}'' = \{(7,7), (2,4,4,2)\}$ it is preferable for $j$ to vote for $\hat{d}$ than $\hat{b}$ and $\hat{c}$, respectively.

\underline{(Step 2: $s^2(a(1)) = (2,5,5,3)$.)} Since $r^2_k=1$ and $a^2_j = a$, we have $H^2_k(a(1)) = \{b,c,d\}$.

(i) Let $\tilde{s} = \{(6,8),(2,4,4,4)\} \in \tilde{S}_{-k}(a(1); r_k)$. Then $f(\tilde{s} + \hat{d}) = (1,d) \succ_k (1,b) = f(\tilde{s} + \hat{a})$, so $\hat{d} ~\tilde{S}_{-k}$-beats $\hat{a}$.

(ii) For any $\tilde{s} \in \tilde{S}_{-k}(a(1); r_k)$, we have that if $f^2(\tilde{s}+\hat{d}) = d$ then $f^2(\tilde{s}+\hat{a}) = b$; otherwise $f^2(\tilde{s}+\hat{d}) = f^2(\tilde{s}+\hat{a})$. Hence, it is never preferable for $k$ to vote for $\hat{a}$ than $\hat{d}$.

(iii) Neither $\hat{b}$ nor $\hat{c}$ $\tilde{S}_{-k}$-dominate $\hat{d}$ since for $\tilde{s} = \{(6,8), (2,4,4,4)\}$ (the same as in (i)), it is preferable for $j$ to vote for $\hat{d}$ than either $\hat{b}$ or $\hat{c}$.

\underline{(Step 3: $s^2(a(2)) = (1,5,5,4)$.)} Since $r^2_j=1$ and $a^2_j = d$, we have $H^2_j(a(2)) = \{b,c,d\}$.

(i) Let $\tilde{s} = \{(6,8),(1,4,4,4)\} \in \tilde{S}_{-j}(a(2); r_j)$. Then $f(\tilde{s} + \hat{a}) = (1,b) \succ_j (1,d) = f(\tilde{s} + \hat{d})$, so $\hat{a} ~\tilde{S}_{-j}$-beats $\hat{d}$.

(ii) For any $\tilde{s} \in \tilde{S}_{-j}(a(2); r_j)$, we have that if $f^2(\tilde{s}+\hat{a}) = d$ then $f^2(\tilde{s}+\hat{d}) = d$; otherwise, if $f^2(\tilde{s}+\hat{d}) \in \{b,c\}$ then $f^2(\tilde{s}+\hat{a}) = f^2(\tilde{s}+\hat{d})$. Hence, it is never preferable for $j$ to vote for $\hat{d}$ than $\hat{a}$.

(iii) Neither $\hat{b}$ nor $\hat{c}$ $\tilde{S}_{-j}$-dominate $\hat{a}$ since for $\tilde{s}' = \{(6,8), (2,4,5,3)\}$ and $\tilde{s}'' = \{(7,7), (2,4,4,3)\}$ it is preferable for $j$ to vote for $\hat{a}$ than $\hat{b}$ and $\hat{c}$, respectively.

\underline{(Step 4: $s^2(a(3)) = (2,5,5,3)$.)} Since $r^2_k=1$ and $a^2_j = d$, we have $H^2_k(a(3)) = \{a,b,c\}$.

(i) Let $\tilde{s} = \{(6,8),(3,4,4,2)\} \in \tilde{S}_{-k}(a(3); r_k)$. Then $f(\tilde{s} + \hat{a}) = (1,a) \succ_k (1,b) = f(\tilde{s} + \hat{d})$, so $\hat{a} ~\tilde{S}_{-k}$-beats $\hat{d}$.

(ii) For any $\tilde{s} \in \tilde{S}_{-k}(a(3); r_k)$, we have that if $f^2(\tilde{s}+\hat{a}) = a$ then $f^2(\tilde{s}+\hat{d}) = b$; otherwise $f^2(\tilde{s}+\hat{a}) = f^2(\tilde{s}+\hat{d})$. Hence, it is never preferable for $k$ to vote for $\hat{d}$ than $\hat{a}$.

(iii) Neither $\hat{b}$ nor $\hat{c}$ $\tilde{S}_{-k}$-dominate $\hat{a}$ since for $\tilde{s} = \{(6,8), (3,4,4,2)\}$ (the same as in (i)), it is preferable for $j$ to vote for $\hat{a}$ than either $\hat{b}$ or $\hat{c}$.
\end{ex}


\begin{ex}
Consider $p=2$ issues with three candidates each, labeled $\{A^i, B^i, C^i, D^i\}$ for $i \leq 2$. Let agent $j$ have $\ell_\infty$ uncertainty metric and $\mo=(2,1)$-legal preferences that satisfy:
\begin{itemize}
    \item If $f^2 = C^2$, then over issue $1$: $A^1 \succ_j B^1 \succ_j D^1 \succ_j C^1$
    \item If $f^2 = B^2$, then over issue $1$: $B^1 \succ_j A^1 \succ_j D^1 \succ_j C^1$
    \item If $f^2 = A^2$, then over issue $1$: $C^1 \succ_j A^1 \succ_j B^1 \succ_j D^1$
\end{itemize}
Suppose $a_j = (D^1, C^2)$ in vote profile $a$ with score tuple $s(a) = \{(6,6,3,5),(2,4,6,0)\}$.

\underline{(Part 1: increasing $r^1_j$.)} This part of the example is motivated by Lemma 3b of \citet{Meir15:Plurality}.

If $r_j = (0,0)$ then it is easy to see that $j$ has no BR steps: the unique winner is $f(a) = (A^1, C^2)$ and, although $j$ can make $B^1$ win, they prefer $A^1 \succ B^1$ when $f^2(a) = C^2$. Hence $LD^1_j= \emptyset$. 

If $r_j = (1,0)$ then
\[
\tilde{S}^1_{-j} = \{(5,6,7) \times (5,6,7) \times (2,3,4) \times (3,4,5)\}
\]
First, when $\tilde{s} = \{(5,5,4,5), (2,4,6,0)\} \in \tilde{S}_{-j}$, it is preferable for $j$ to vote for either $\hat{a}_j = (A^1, C^2)$ or $\hat{a}'_j = (B^1, C^2)$ instead of $a_j$. Recall that ties are broken by lexicographical order. Hence both $\hat{a}_j$ and $\hat{a}'_j ~\tilde{S}_{-j}$-beat $a_j$. 
Second, it is easy to see that there is no $\tilde{s}' \in \tilde{S}_{-j}$ where it is preferable for $j$ to vote for $a_j$ instead of either $\hat{a}_j$ or $\hat{a}'_j$. Thus, $a_j$ does not $\tilde{S}_{-j}$-beat either $\hat{a}_j$ or $\hat{a}'_j$.
Third, we notice that $\hat{a}_j~\tilde{S}$-dominates $\hat{a}'_j$, which which follows from $A^1 \succ_j B^1$ whenever $f^2(a)=C^2$. It is easy to see that $\hat{a}_j$ is not $\tilde{S}$-dominated. Thus $LD^1_j = \{(A^1,C^2)\}$.

Now consider $r_j = (2,0)$. Then
\begin{align*}
\tilde{S}^1_{-j} = \{&(4,5,6,7,8) \times (4,5,6,7,8) \\
\times & (1,2,3,4,5) \times (2,3,4,5,6)\}
\end{align*}
Since $\tilde{s} = \{(5,5,3,5), (2,4,6,0)\} \in \tilde{S}_{-j}$, both $\hat{a}_j$ and $\hat{a}'_j$ still $\tilde{S}_{-j}$-beat $a_j$. However, as $\tilde{s} = \{(4,4,5,5), (2,4,6,0)\} \in \tilde{S}_{-j}$ we have $f^1(\tilde{s}+a_j) = (D^1,C^2) \succ_j (C^1,C^2) = f^1(\tilde{s}+\hat{a}_j) = f^1(\tilde{s}+\hat{a}'_j)$, so $a_j ~\tilde{S}_{-j}$-beats both $\hat{a}_j$ and $\hat{a}'_j$. Hence, $LD^1_j = \emptyset$.

\underline{(Part 2: increasing $r^2_j$.)} Recall that when $r_j = (1,0)$, we found that $\hat{a}_j = (A^1,C^2) ~\tilde{S}_{-j}$-dominated $a_j = (D^1,C^2)$, while $\hat{a}'_j = (B^1,C^2) ~\tilde{S}_{-j}$-dominated $a_j$ and $\hat{a}_j ~\tilde{S}_{-j}$-dominated $\hat{a}'_j$; thus $LD^1_j = \{(A^1, C^2)\}$.

If $r_j = (1,1)$, then
\[
\tilde{S}^2_{-j} = \{(1,2,3) \times (3,4,5) \times (4,5,6) \times (0,1)\}
\]
For $\tilde{s} = \{(5,5,4,5),(2,4,5,0)\} \in \tilde{S}_{-j}$ it is preferable for $j$ to vote for $\hat{a}_j$ rather than $\hat{a}'_j$, whereas for $\tilde{s} = \{(5,5,4,5),(2,5,4,0)\} \in \tilde{S}_{-j}$ it is preferable for $j$ to vote for $\hat{a}'_j$ rather than $\hat{a}_j$. Then both $\hat{a}_j$ and $\hat{a}'_j$ $\tilde{S}_{-j}$-dominate $a_j$ but neither are $\tilde{S}_{-j}$-dominated. Therefore $LD^1_j = \{(A^1, C^2), (B^1, C^2)\}$.

If $r_j = (1,2)$, then
\begin{align*}
\tilde{S}^2_{-j} = \{&(0,1,2,3,4) \times (2,3,4,5,6) \\
\times & (3,4,5,6,7) \times (0,1,2)\}
\end{align*}
Then there are score vectors $\tilde{s} \in \tilde{S}_{-j}$ where $f^2(\tilde{s}+a_j) = f^2(\tilde{s}+\hat{a}_j) = f^2(\tilde{s}+\hat{a}'_j) = C^2$, so that neither $\hat{a}_j$ nor $\hat{a}'_j~\tilde{S}_{-j}$-dominate $a_j$. Thus $LD^1_j = \emptyset$.

\label{ex:LDI_steps_different_uncertainties}
\end{ex}

\begin{ex}
Let there be $p=2$ binary issues and $n=13$ agents who each use the $\ell_\infty$ uncertainty metric with common fixed uncertainty parameters $(r^1_j, r^2_j) = (1,2)~\forall j \leq n$. Suppose that agents' preferences abide by the following four types:
\begin{itemize}
    \item (Type 1) three agents have rankings: 
$(0,1) \succ (1,1) \succ (1,0) \succ (0,0)$; 
    \item (Type 2) five agents have rankings: $(0,0) \succ (0,1) \succ (1,1) \succ (1,0)$; 
    \item (Type 3) four agents have rankings: $(1,0) \succ (1,1) \succ (0,0) \succ (0,1)$; and
    \item (Type 4) one agent has ranking: $(1,1) \succ (1,0) \succ (0,1) \succ (0,0)$.
\end{itemize}

There is a cycle passing through the four vote profiles $a(0)$ (which is truthful), $a(3)$, $a(8)$, and $a(11)$ listed in Table \ref{tab:LDI_cycle}, in which every agent of the same type has the same vote. There are four parts of the cycle between these profiles:
\begin{itemize}
    \item between $a(0)$ and $a(3)$, all agents of Type 1 make LDI steps on the first issue $(0,1) \xrightarrow{1} (1,1)$;
    \item between $a(3)$ and $a(8)$, all agents of Type 2 make LDI steps on the second issue $(0,0) \xrightarrow{2} (0,1)$;
    \item between $a(8)$ and $a(11)$, all agents of Type 1 make LDI steps on the first issue $(1,1) \xrightarrow{1} (0,1)$;
    \item between $a(11)$ and $a(16)=a(0)$, all agents of Type 2 make LDI steps on the second issue $(0,1) \xrightarrow{2} (0,0)$.
\end{itemize}
Notice that no agent of Types 3 and 4 make LDI steps since they are voting truthfully and have separable preferences.

\begin{table}[h!]
\centering
 \begin{tabular}{|c|c|c|c|c|} 
 \hline
 Agent Type $j$ & $a_j(0)$ & $a_j(3)$ & $a_j(8)$ & $a_j(11)$ \\ [0.5ex]
 \hline
 $1$ & $(0,1)$ & $(1,1)$ & $(1,1)$ & $(0,1)$ \\
 $2$ & $(0,0)$ & $(0,0)$ & $(0,1)$ & $(0,1)$ \\
 $3$ & $(1,0)$ & $(1,0)$ & $(1,0)$ & $(1,0)$ \\
 $4$ & $(1,1)$ & $(1,1)$ & $(1,1)$ & $(1,1)$ \\
 \hline
 \end{tabular}
 \caption{Agents' votes for $a(0)$ (truthful), $a(3)$, $a(8)$, and $a(11)$.}
 \label{tab:LDI_cycle}
\end{table}

We claim these are valid LDI steps as follows. For any vote profile in $\{a(0), a(1), a(2)\}$, let $t$ be the number of Type 1 agents who have made LDI steps. Then $s(a(t)) = \{(8-t,5+t), (9,4)\}$ and for any Type 1 agent $j$ who has not made their first LDI step yet,

$
\tilde{S}^1_{-j} = \{(6-t, 7-t, 8-t) \times (4+t, 5+t, 6+t)\}
$

$
\tilde{S}^2_{-j} = \{(7,8,9,10,11) \times (2,3,4,5,6)\}
$

Define $\hat{a}_j = (1,1)$. Then $\tilde{s} = \{(6,6), (9,4) \in \tilde{S}_{-j}(a(t); r_j)~\forall t \in \{0,1,2\}$, so $f(\tilde{s}+ \hat{a}_j)=(1,0) \succ_j (0,0) = f(\tilde{s}+a_j)$ and $\hat{a}_j~\tilde{S}_{-j}$-beats $a_j$. It is easy to see that $a_j$ does not $\tilde{S}_{-j}$-beat $\hat{a}$ and that $\hat{a}$ is not $\tilde{S}_{-j}$-dominated since issues are binary. Thus $LD^1_j = \{\hat{a}_j\}$.

The other LDI steps follow similar reasoning, yielding the cycle presented in the table. It can be verified that this represents all possible LDI sequences from the truthful vote profile.

\label{ex:LDI_cycle}
\end{ex}


\section{Nonatomic model}
\label{apx:non_atomic_model}

Each of the results presented for binary issues and atomic agents, where each agent contributes one unit of influence to the population of $n$ votes, also generalize to a nonatomic variant of IV, in which agents are part of a very large population and have negligible influence over the outcome. Our model extends \citet{Meir15:Plurality}'s iterative plurality voting for nonatomic agents to the multi-issue setting. Like this setting, our convergence results permit arbitrary subsets of agents to change their vote simultaneously. This differs from the finite case which may not converge if multiple agents change their votes simultaneously \cite{Meir10:Convergence}. 

In this section, we provide the necessary definitions for the nonatomic model, using our existing notation wherever possible.
There are two major differences: (i) with the identify of an agent, and (ii) with the uncertainty score set. First, rather than treating each agent $j \leq n$ individually, there are groups of identical agents with $\epsilon$ mass; ``an agent of type $j$'' then refers to a representative agent in this group. Second, as agents have negligible influence, type $j$ agents consider any score tuple in $\tilde{S}(a; r_j)$ possible; all agents agree of the same type agree what real score tuples are possible.
Lemma \ref{lem:lem3_multi} and Theorems \ref{thm:improvement_step_relations}, \ref{thm:o_legal_converge}, and \ref{thm:dynamic_uncertainty_binary_converge} are each upheld after applying this model redefinition. Similar non-convergence results also apply for the nonatomic variant of IV.


\begin{paragraph}{Basic notations.}
We do not have a finite set of agents. Rather, a \emph{preference profile} $Q \in \Delta(\calL(D))$ is a distribution over rankings that specifies the fraction of agents $Q(R)$ for each $R\in \calL(D)$.
We only consider moves by subsets of agents since each agent has negligible influence. To avoid infinite improvement sequence paths of sequentially smaller subsets of agents, we assume there is a minimum resolution $\epsilon$, such that sets of agents of the same type with mass $\epsilon$ always move together (although in an uncoordinated manner; see Appendix~IX of \citet{Meir15:Plurality}).
We denote the collection of these $1/\epsilon$ sets by $J$. Since all agents in set $j \in J$ are indistinguishable, we refer to ``agent $j$'' as an arbitrary agent in the set $j$. Then $R_j\in \calL(D)$ is the preference, $r_j$ is the uncertainty parameter, and $a_j$ is the vote of an arbitrary agent in the set of vote profile $a$.
\end{paragraph}

\begin{paragraph}{Winner determination.} 
For any issue $i \in \calP$, we define the \emph{score vector} $s^i(a) = (s^i(c;a))_{c \in D_i}$ induced by the vote profile $a$ such that $s^i(c;a) = |\{f:a^i_j = c\}| \epsilon \in [0,1]$.
Winner determination is exactly as in the atomic model with lexicographical tie-breaking.  
\end{paragraph}

\begin{paragraph}{Uncertainty and local dominance.} 
We assume agents utilize the same candidate-wise distance uncertainty as the atomic model. Like the atomic model, each agent of type $j \in J$ selected by the scheduler to change their vote has uncertainty parameters $r_j = (r^i_j)_{i \in \calP}$. Unlike the atomic model, agents have negligible influence in the score vector. Hence, they take their uncertainty score sets with respect to the real score tuple:
$\tilde{S}^i(a; r^i_j) = \{v^i : \delta(v^i, s^i(a)) \leq r^i_j \}$. As before, $\tilde{S}(a; r_j) = \times_{i \in \calP} \tilde{S}^i(a; r^i_j)$.

Given $v \in \tilde{S}(a; r_j)$, we define $f(v+a_j)$ to be the outcome an agent of type $j$ expects by voting for $a_j$ according to score vector $s$. That is, for each issue $i$, the extra vote $a^i_j$ decides the winner if several candidates are tied with maximal score in $s$, overriding the default tie-breaker (see Appendix X of \citet{Meir15:Plurality}).

The definition of a local dominance improvement (LDI) step for a nonatomic agent of type $j \in J$ from vote $a_j$ to $\hat{a}_j$ on issue $i$ then is the same as in the atomic model of Definitions \ref{dfn:s_dom} and \ref{dfn:local_dom}, applying this redefinition and using $\tilde{S}(a; r_j)$ in lieu of $\tilde{S}_{-j}(a; r_j)$.
The response function $g_j$ is also the same, except that its domain (all possible profiles) is now $\calD^{|J|}$ rather than $\calD^{n}$. The definition of equilibrium does not change.

\end{paragraph}

\end{document}